\providecommand{\customgenericname}{}
\newcommand{\newcustomtheorem}[2]{%
  \newenvironment{#1}[1]
  {%
   \renewcommand\customgenericname{#2}%
   \renewcommand\theinnercustomgeneric{##1}%
   \innercustomgeneric
  }
  {\endinnercustomgeneric}
}
\newcommand{\bra}[1]{\left\langle{#1}\right\vert} 
\newcommand{\ket}[1]{\left\vert{#1}\right\rangle} 
\newcommand{\ketbra}[2]{\left\vert{#1}\right\rangle\!\left\langle{#2}\right\vert}
\renewcommand{\L}{\left(} 
\newcommand{\R}{\right)} 
\newcommand{\dg}{^\dagger}
\newcommand{\tr}{\mathrm{tr}}
\newcommand{\Id}{\openone}
\newcommand{\ii}{\mathrm{i}} 
\renewcommand{\H}{\mathcal{H}} 
\newcommand{\Hsym}{\mathcal{S}} 
\newcommand{\LO}[1]{\mathrm{LO}\left(#1\right)} 
\newcommand{\n}{\mathbf{n}} 
\renewcommand{\t}{\boldsymbol{\tau}} 
\newcommand{\bi}{\mathbf{i}} 
\newcommand{\OO}{\mathbf{0}}
\newcommand{\Sep}{\mathrm{Sep}} 
\newcommand{\DD}{\mathcal{D}} 
\newcommand{\Sym}{\textrm{Sym}} 
\newcommand{\SYM}[1]{\mathcal{S}_{#1}} 
\newcommand{\symm}[2]{\mathcal{S}_{#1,#2}} 
\newcommand{\targ}[1]{\rho_{#1,n}} 
\newcommand{\true}{\Psi_n} 
\newcommand{\dsep}[1]{\Delta_{#1,n}} 
\newcommand{\dsepB}[1]{\Delta\left( #1,n \right )} 
\renewcommand{\o}{\textrm{o}} 
\renewcommand{\O}{\textrm{O}} 
\newcommand{\D}{\mathcal{D}} 
\renewcommand{\S}{\mathcal{S}} 
\newcommand{\N}{\mathcal{N}} 
\newcommand{\NN}{\mathrm{N}} 
\newcommand{\C}{\mathbb{C}}  
\newcommand{\U}{\mathrm{U}} 
\renewcommand{\Pr}{\mathrm{Pr}_\tran}
\newcommand{\IO}{\textsc{IO}} 
\newcommand{\step}{\textsc{step}~}
\newcommand{\dtr}{\mathrm{d}_{\mathrm{tr}}}
\newcommand{\tran}{\eta} 
\newcommand{\tranef}{\eta_\mathrm{eff}} 
\newcommand{\eqdef}{\mathrel{:=}}
\newcommand{\proj}[2]{\mathbb{P}_{\textrm{sym}}^{#1,#2}}
\theoremstyle{plain}
\newtheorem{theorem}{Theorem}
\theoremstyle{plain}
\newtheorem{corollary}{Corollary} 
\theoremstyle{plain}
\newtheorem{lemma}{Lemma}
\theoremstyle{plain}
\theoremstyle{plain}
\newtheorem{remark}{Remark}
\theoremstyle{plain}
\newtheorem{problem}{Problem}
\global\long\global\long\global\long\def\SET#1#2{\mbox{\ensuremath{\ensuremath{\left\lbrace\left. #1\ \right|\ #2\right\rbrace }}}} 
\global\long\global\long\global\long\def\kb#1#2{\mbox{\ensuremath{\ensuremath{\ensuremath{|#1\rangle\!\langle#2|}}}}} 
\begin{document}
\title{Classical simulation of photonic linear optics with lost particles}

\author{Micha\l\ Oszmaniec}
\email{michal.oszmaniec@gmail.com}
\affiliation{ 
Institute of Theoretical Physics and Astrophysics, National Quantum Information Centre, Faculty of Mathematics, Physics and
Informatics, University of Gdansk, Wita Stwosza 57, 80-308 Gdańsk, Poland}

\author{Daniel J.\ Brod}
\email{brod@if.uff.br}
\affiliation{Instituto de F\'{\i}sica, Universidade Federal Fluminense, Av. Gal. Milton Tavares de Souza s/n, Niter\'oi, RJ, 24210-340, Brazil}

\date{\today}

\begin{abstract}
We explore the possibility of efficient classical simulation of linear optics experiments under the effect of particle losses. Specifically, we investigate the canonical boson sampling scenario in which an $n$-particle Fock input state propagates through a linear-optical network and is subsequently measured by particle-number detectors in the $m$ output modes. We examine two models of losses. In the first model a fixed number of particles is lost. We prove that in this scenario the output statistics can be well approximated by an efficient classical simulation, provided that the number of photons that is left grows slower than $\sqrt{n}$. In the second loss model, every time a photon passes through a beamsplitter in the network, it has some probability of being lost. For this model the relevant parameter is $s$, the smallest number of beamsplitters that any photon traverses as it propagates through the network. We prove that it is possible to approximately simulate the output statistics already if $s$ grows logarithmically with $m$, regardless of the geometry of the network. The latter result is obtained by proving that it is always possible to commute $s$ layers of uniform losses to the input of the network regardless of its geometry, which could be a result of independent interest. We believe that our findings put strong limitations on future experimental realizations of quantum computational supremacy proposals based on boson sampling.
\end{abstract}
\maketitle

\section{Introduction} \label{sec:intro}

Quantum computers are expected to offer an advantage in a wide variety of computational problems relative to their classical counterparts. However, experimental limitations mean quantum computers are notoriously hard to build, and to date it remains unclear which technological architecture is the most promising for this purpose. Thus, a full-purpose, large-scale universal quantum computer still seems like a long-term goal for the field.

With this in mind, a recent intermediate milestone was proposed in the form of the quantum computational supremacy paradigm \cite{Lund2017,Harrow2017}. This consists of a series of proposed restricted models of quantum computing which are not expected to be universal, but for which there are strong arguments that they outperform classical computers in some computational task. One example of such proposals is boson sampling \cite{Aaronson2013a}, where the quantum device is restricted to single-photon states, linear optics and photon-number detectors, which we focus on in this work. Other examples include circuits of commuting gates \cite{Bremner2011}, a variant of the one-clean-qubit model \cite{Knill1998,Morimae2014}, random quantum circuits \cite{Aaronson2016b,Boixo2016}, among others. These results follow a similar reasoning: one identifies some computational problem that is expected to be hard (such as computing the permanent of Gaussian matrices, in the case of boson sampling), posits a hardness conjecture regarding that problem, and shows formally how that conjecture implies that an efficient classical algorithm for simulating the physical system in question would have unexpected or surprising complexity-theoretic consequences. And so, the computational problem that is ``solved'' by these restricted models is just to simulate their own behavior (according to the predictions of Quantum Mechanics), which a classical computer should not be able to do efficiently.

Shortly after the seminal boson sampling \cite{Aaronson2013a} paper was published, several works started investigating the effects of realistic experimental imperfections on the idealized theoretical model. The robustness of boson sampling was analyzed under the effect of partial photon distinguishability \cite{Shchesnovich2015,Renema2017}, fabrication imperfections on the linear-optical transformation \cite{Kalai2014,Arkhipov2015,Leverrier2014}, losses \cite{Aaronson2016,Rahimi-Keshari2016}, probabilistic sources \cite{Motes2013} and so on. On the experimental side, several small-scale implementations of boson sampling have been reported so far \cite{Broome2013,Crespi2013b,Spring2013,Tillmann2013,Spagnolo2014,Carolan2014,Carolan2015,Bentivegna2015,Loredo2017,He2017}, with state-of-the-art implementations with up to four photons using near-deterministic quantum dot sources \cite{Loredo2017,He2017}. Through the development of the field there has been much interaction between the experimental and theoretical communities in order to assess the most important obstacles towards a large-scale implementation of boson sampling. One successful example of such an interaction was the conception of scattershot boson sampling \cite{Lund2014,Scottblog,Bentivegna2015}, an alternative approach that allows circumvention of the exponential slowdown incurred due to the use of probabilistic sources. From the complementary perspective, recently developed classical algorithms \cite{Neville2017,Clifford2017} allow simulation of up to 40-photon boson sampling experiments in current supercomputers. 

One of the main obstacles that still hinders the scaling of boson sampling, both on the theoretical and experimental sides, is photon loss. In \cite{Aaronson2016}, the authors investigated the complexity of lossy boson sampling, and concluded that it remains as hard as standard boson sampling when a constant number of photons is lost. Unfortunately, this falls short of the realistic regime where we expect some \emph{fraction} of the photons to be lost. Even worse, most current implementations \cite{Broome2013,Crespi2013b,Spring2013,Tillmann2013,Spagnolo2014,Carolan2014,Carolan2015,Bentivegna2015,Loredo2017,He2017} are expected to suffer from \emph{exponential} losses. The intuition behind this claim is the following. Every time a photon traverses a beamsplitter, it has some probability of being lost, say $(1- \eta)$. And so, if it must traverse $s$ beamsplitters in a network, the probability that it arrives at the output should decay as $\eta^s$. The best known lower bound on the depth of a linear-optical circuit for it to satisfy the complexity requirements laid out in \cite{Aaronson2013a} is that it grows as $n \log n$. Thus, the probability that each photon survives the experiment decays exponentially and quickly becomes negligible.

It is then a timely question to ask how much loss a boson sampling device can support before it admits an efficient classical simulation. To our knowledge, the best known bound of this type states that an efficient classical simulation becomes possible when all but $\log n$ photons are lost (this was stated without proof in \cite{Aaronson2016}, but is a straightforward consequence of the more recent algorithm of \cite{Clifford2017}). In \cite{Rahimi-Keshari2016}, the authors also investigate a similar question, and show that linear optics becomes classically simulable (via a positive quasiprobability representation) if a fraction of the photons is lost, \emph{as long as} this is accompanied by a constant dark-count probability per detector.

In this work we propose a procedure to classically simulate certain linear-optical processes in a regime where losses are large. We do this by considering different physically-motivated models of loss. First we assume that exactly $n-l$ out of $n$ photons have been lost, and show that efficient classical simulation is possible whenever $l$ scales slower than $\sqrt{n}$. We then consider a more usual loss model, when each photon has some (mode-independent) probability of being lost in the circuit, and so the output state does not have a well-defined number of particles. We show that, also in that case, if the \emph{average} number of remaining photons is less than $\sqrt{n}$ an efficient classical simulation is possible. Finally, we consider the most general case where the photons can traverse some linear-optical network of beamsplitters and where the loss probability can depend on the path followed by each photon. In this case we prove that, if the smallest number of lossy elements any input photon has to traverse in its path through the network is $s$, we can effectively model that network as $s$ rounds of uniform losses followed by a linear-optical channel with an efficient classical description. This leads to an efficient classical algorithm for lossy linear-optical networks of depth greater than $C  \log  n$, for some suitable constant $C$. 

Our results also have implications outside boson sampling. First, in order to obtain our main result in \cref{thm:main1} we need to find the best approximation, in trace distance, to lossy bosonic states via particle-separable states (i.e.\ convex combinations of symmetric product pure states). To this end we use techniques that can be useful in the study of  entanglement properties of symmetric states \cite{Ghune2009,YU2016,Quesada2017}. Our results might also be relevant for the design and characterization of complex linear-optical networks \cite{Reck1994,Clements2016,Carolan2015}. In \cref{thm:PULLoutNOISE} we prove that a network where the smallest number of lossy elements in any path between inputs and outputs is $s$ (but that has an otherwise arbitrary geometry) is equivalent to another network with $s$ rounds of mode-independent losses at the input. This proves the intuition expressed previously regarding the exponential losses in current implementations of linear-optical networks, but also formally justifies the approximation of mode-independent losses which is often assumed in the literature \cite{Aaronson2016,Rahimi-Keshari2016,Neville2017,Demkowicz2015}.

Our paper is organized as follows. In \cref{sec:background} we give some theoretical background that underlies our work. In particular we review how linear-optical processes can be described both in first and second quantization formalisms, how different models of loss can be described mathematically, and some foundations on the boson sampling problem. In \cref{sec:genARG} we describe the general idea of our simulation algorithm. In \cref{sec:allmainresults} we describe our main results and discuss their consequences, although we defer the technical proofs to \cref{sec:mainProofs}. In \cref{sec:bosonsampling} we present the relation of our results to other works, with particular emphasis on how they fit in the  complexity-theoretic formalism of boson sampling. In \cref{sec:mainProofs} we detail the proofs of the results of \cref{sec:allmainresults}. Finally, in \cref{sec:conclusions} we conclude with some general remarks and open questions.

{\textbf{Notation:}} Throughout the paper we use the following notation. Given two positive-valued functions $f$ and $g$, we write  $f=\o(g)$ if $\lim_{x\rightarrow\infty} f(x)/g(x) = 0$ and   $f=\O(g)$ if $ \lim_{x\rightarrow\infty} f(x)/g(x) < \infty$. Likewise, we say $f=\omega(g)$ if $f=\o(g)$. Finally, $f\approx g$ will refer to the situation when $ \lim_{x\rightarrow\infty} f(x)/g(x)=1$.

\section{Theoretical background} \label{sec:background}

\subsection{Description of bosonic states} \label{sec:defCONC}

In this work we consider systems of bosons that can occupy $m$ modes. One natural description of these systems is in the language of second quantization, commonly used in the context of quantum optics \cite{Bachor2004,GaussQuantInfo2012} to describe the boson sampling problem.  Here we focus on the complementary description based on first quantization, more natural for the description of the so-called particle entanglement, a concept that will prove useful in what follows.  

In the language of first quantization, a system of $n$ particles is described by the Hilbert space $\symm{n}{m}\eqdef \Sym^n \L \C^m \R$, i.e.\ the symmetric subspace of $\H_{n,m}\eqdef \L \C^m \R^{\otimes n}$, which is the Hilbert space of $n$ distinguishable particles that can occupy $m$ modes. This reflects the fact that bosonic wave-functions are symmetric upon exchange of particles. A basis $\lbrace \ket{i} \rbrace_{i=1}^m$ of the single-particle space $\C^m$ defines the so-called Dicke basis of $\symm{n}{m}$. Elements of this basis are labeled by $m$-element tuples $\n=\L n_1,\ldots,n_m\R$ of non-negative integers $n_i$ satisfying $\sum_{i=1}^m n_i = n$. The Dicke state $\ket{\n}$ can be defined by $ \ket{\n}\eqdef N(\n) \proj{n}{m} \ket{\bi}$, where $\proj{n}{m}$ is a projector onto $\symm{n}{m}$ (acting on $\H_{n,m}$) and $\ket{\bi}\eqdef \ket{i_1} \ldots \ket{i_n}$. The relation between $\n$ and $\bi$ is as follows: $n_k$ equals the number of times $\ket{k}$ appears in  $\ket{\bi}$, and so the basis $\ket{\n}$ is also called occupation-number basis.     Finally, $N(\n)$ is a normalization factor [see \cref{eq:coefFORM}]. We also define the shorthand for the standard state
\begin{equation}\label{eq:defINPUTstate}
\ket{\Psi_n}\eqdef |\overbrace{1,\ldots,1}^{n},\overbrace{0,\ldots,0}^{m-n} \rangle,
\end{equation} 
which is often assumed to be at the input of a boson sampling device (see \cref{sec:BosSamProp}). Additionally, by $\true$ we denote the projector onto $\ket{\Psi_n}$.

In our analysis, a predominant role will be played by (bosonic) \emph{particle-separable} or \emph{symmetric separable states}. For a fixed number of particles $l$, they are defined \cite{Eeckert2002,Killoran2014} as states $\sigma$ that can be written as convex combinations of pure product bosonic states,  i.e.\
\begin{equation}\label{eq:partSEP}
\sigma =\sum_{\alpha} p_{\alpha} \kb{\phi_\alpha}{\phi_\alpha}^{\otimes n}
\end{equation} 
where $\lbrace p_\alpha \rbrace$ is some probability distribution and $\kb{\phi_\alpha}{\phi_\alpha}$ are states on a single-particle space $\C^m$.  In what follows we denote the set of symmetric separable $n$-particle states by $\Sep\L \symm{n}{m} \R$. 
Bosonic $n$-particle states that cannot be decomposed in this form are called \emph{particle-entangled}. 
When we restrict the standard notion of entanglement of distinguishable particles from $\H_{n,m}$ to $\symm{n}{m}$, it coincides with the above definition of particle entanglement.
It is also a resource for quantum sensing \cite{Toth2014,Demkowicz2015} and necessary for violations of Bell inequalities in the presence of superselection rules \cite{Wasak2016}.

If we remove the constraint on the total number of particles, the corresponding Hilbert space (known as the bosonic Fock space) is the direct sum of the corresponding Hilbert spaces $\SYM{m} \eqdef \bigoplus_{n=0}^{\infty} \symm{n}{m}$, where $\symm{0}{m}$ is the one-dimensional space spanned by the Fock vacuum  $\ket{\OO}$. The occupation number states $\ket{\n}$ (also called Fock states in this context) constitute a basis of $\SYM{m}$.  The notion of particle separability extends naturally to  $\SYM{m}$: a state $\tilde{\sigma}$ supported on $\SYM{m}$ is particle-separable [i.e.\ $\tilde{\sigma}\in\Sep\L \SYM{m} \R$)] if it can be expressed as
\begin{equation}\label{eq:diffNsep2}
\tilde{\sigma} = \sum_{n=0}^\infty p_n  \sigma^{(n)}\  ,
\end{equation}
where $\lbrace p_n \rbrace$ is a probability distribution and $\sigma^{(n)}\in\Sep\left(\Hsym_{n,m} \right)$. From our perspective, particle-separable states will be important as they yield classically-simulable instances of boson sampling (see \cref{sec:BosSamProp}).

Second quantization constitutes a complementary language for the description of bosonic systems. In this formalism, a central role is played by creation ($a^{\dag}_i$) and annihilation  ($a_i$) operators that act on $\SYM{m}$ and satisfy canonical commutation relations $[a_i,a^{\dag}_j] =\delta_{ij}$. These operators define the number operator on mode $\ket{i}$, $\hat{n}_i \eqdef a^{\dag}_i a_i$, which satisfies the relation $\hat{n}_i \ket{\n}=n_i \ket{\n}$. Fock states can then be expressed as 
\begin{equation*}
\ket{\n}=\ket{n_1,\ldots,n_m}=\prod_{i=1}^m \frac{(a^{\dag}_i)^{n_i}}{\sqrt{n_{i}!}}\ket{\OO}.
\end{equation*}
Every operator acting on $\SYM{m}$ can be expressed as a polynomial in creation and annihilation operators. This, together with the commutation relations, leads to the mode-dependent tensor product decomposition 
\begin{equation}\label{eq:modTENS}
\SYM{m} \approx \bigotimes_{i=1}^{m} \S^{(i)}_1\ ,
\end{equation}
 where $\S^{(i)}_1$ is the Fock space associated to the $i$th mode. This decomposition leads to the notion of \emph{mode separability} (and, by negation, mode entanglement), which differs from that of particle separability. In particular, Fock states admit the decomposition $\ket{\n} \approx \otimes_{i=1}^m \ket{n_i}$, while being particle-entangled whenever more than two modes are occupied. Conversely, particle-separable states $\ket{\phi}^{\otimes l}$ are generically mode-entangled.


\subsection{Linear optics in first and second quantization }

A lossless linear-optical transformation  is encoded by an $m\times m$ unitary matrix $U$. In the first quantization description, this matrix induces the independent evolution of particles as
\begin{equation}\label{eq:firstQUANTtran}
\rho \mapsto U^{\otimes n} \rho \L U^{\otimes n} \R^\dag ,\ \text{for } \rho\ \text{ - a state on } \symm{n}{m}\ . 
\end{equation}
In second  quantization, the action of $U$ can be conveniently described as a transformation on creation  operators 
\begin{equation}\label{eq:secondQUANTtran}
a^{\dag}_{i, \mathrm{in}}\mapsto a^{\dag}_{i, \mathrm{out}} = \sum_{j=1}^m U_{ij} a^{\dag}_{i, \mathrm{in}}
\end{equation}
where $a^{\dag}_{i, \mathrm{in}}$ and $a^{\dag}_{i, \mathrm{out}}$ are the operators for the $i$th mode at the input and output, respectively, of the linear-optical transformation. Transformations described by the above equations give rise to the quantum channel $\Lambda_U$ acting on states defined on $\SYM{m}$.

Consider now the paradigmatic boson sampling instance, where an $n$-photon input state $\true$ [see \cref{eq:defINPUTstate}] evolves according to an $m$-mode linear-optical transformation $U$ and is measured at the end in the photon-number basis, as represented in \cref{fig:linearoptics}. One way to obtain a classical simulation of this system is to map it to a regular quantum circuit, and then leverage known results for classical simulation of quantum circuits. The representations defined in \cref{eq:firstQUANTtran} and \cref{eq:secondQUANTtran} lead to two methods to perform this mapping.

\begin{figure}
    \centering
    \includegraphics[width=0.3\textwidth]{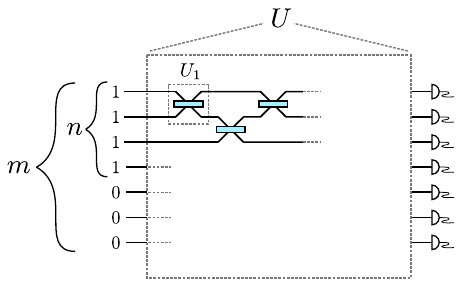}
    \caption{A linear-optical circuit with ``boson sampling resources'', i.e.\ Fock input states and number-resolving detectors.}
    \label{fig:linearoptics}
\end{figure} {}

In the first-quantized description arising from \cref{eq:firstQUANTtran}, each of the $n$ photons gets mapped to a single $m$-level system, which now labels in which mode that photon is. We represent the corresponding quantum circuit in \cref{fig:first_second_quant}(a). In order to simulate a bosonic state, the system must be initialized in a fully symmetric state.
The linear-optical transformation $U$ is then implemented as the local unitary transformation $U^{\otimes n}$. This description treats all linear-optical transformations in the same footing, and does not bring much insight e.g.\ into geometrical properties of a circuit. On the other hand, several properties of the {\em input state} are transparent. For example, if one photon is perfectly distinguishable from the others (maybe by some internal degree of freedom such as polarization), the simulation of \cref{fig:first_second_quant}(a) just leaves that particle out of the symmetrization step. Thus, particle indistinguishability in the physical system translates into entanglement in the first-quantized simulation. Another effect with a natural manifestation in the first-quantized simulation is photon loss, as we discuss in more detail in the next Section.

In the second-quantized description that arises from \cref{eq:secondQUANTtran}, we have a decomposition according to \cref{eq:modTENS} where each of the $m$ {\em modes} is mapped to an $n$-level system labeling its occupation number. In this case, each 2-mode transformation in the original linear-optical circuit is mapped to a gate acting only on the two corresponding subsystems in the simulation, as in \cref{fig:first_second_quant}(b). This representation has the benefit of preserving geometrical properties of the circuit, and we might try to adapt results that limit the computational power of quantum circuits based e.g.\ on their depth \cite{Terhal2004,Jozsa2006,Brod2015}.

\begin{figure}
    \centering
    \includegraphics[width=0.7\textwidth]{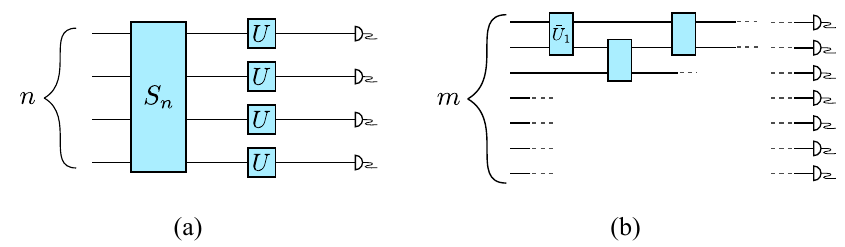}
    \caption{Two different quantum circuits that can simulate the system of \cref{fig:linearoptics}, based on the two ways to represent bosonic particles.}
    \label{fig:first_second_quant}
\end{figure}

\subsection{Losses in linear optics}  \label{sec:bslosses}
Let us now discuss how losses affect linear-optical systems. We survey three models of losses: the fixed-loss model and the beamsplitter loss model, the latter which can be divided in the uniform case (i.e.\ where all modes are subject to the same loss) and the non-uniform case describing more general (passive) linear-optical networks.

\begin{figure}
    \centering
    \includegraphics[width=0.3\textwidth]{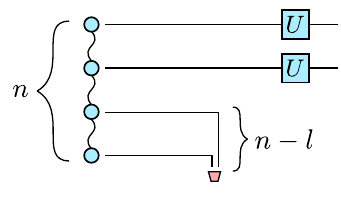}
    \caption{The fixed-loss model. Since the number of lost particles, $n-l$, is fixed, the effect of losses is easy to describe in the first-quantization mapping by tracing out $n-l$ subsystems. Here the initial state is the bosonic symmetric state, which can be viewed either as the physical state itself, or as the state in the simulation of \cref{fig:first_second_quant}(a) after the symmetrization procedure.}
    \label{fig:fixedlosses}
\end{figure}

\paragraph*{Fixed-loss model---} In the fixed-loss model, we have initially $n$ particles and assume that {\em exactly} $n-l$ of them are lost. If we further assume that losses are mode-independent, i.e.\ any photon is equally likely to have been lost, this model has a very simple interpretation in the first-quantization picture. The loss corresponds to tracing out $n-l$ of the $n$ particles, as shown in \cref{fig:fixedlosses},
\begin{equation}\label{eq:fixedLOSSdef}
\rho \mapsto \tr_{n-l}\L \rho \R\ . 
\end{equation}
Since $\rho$ is symmetric it does not matter which $n-l$ particles we trace over. By standard mathematical properties of trace we can show that these losses commute with any linear-optical unitary transformation,
\begin{equation}\label{eq:commutationOFlosses}
\tr_{n-l}\L \U^{\otimes n} \rho (U\dg)^{\otimes n} \R = U^{\otimes l} \tr_{n-l}\L \rho \R (U\dg)^{\otimes l}\ .  
\end{equation}
This is important as it implies that it does not matter whether losses occurred before or after the linear-optical transformation. 

The fixed-loss model was used in \cite{Aaronson2016}\footnote{In fact, the authors of \cite{Aaronson2016} only investigate boson sampling when losses occur at the input to the circuit, and leave as an open question whether their results also apply when losses happen at the output. The observation below \cref{eq:commutationOFlosses} answers that question affirmatively.} and in the context of quantum metrology in \cite{Oszmaniec2016}. Physically, this model is suitable for cases where number of lost particles can be effectively controlled and accounted for, e.g.\ in atomic interferometry \cite{Schumm2005,Sebby2007,Zhang2012,Stroescu2015}. It is also more convenient for defining lossy boson sampling as a \emph{computational} problem, since we can restrict our attention to the number of losses observed in a single run of the device with fewer assumptions about the physical process that caused them \cite{Aaronson2016}. In the context of optics, however, a more faithful physical model of losses is given by the beamsplitter model \cite{Barnett1998}, which we describe now. 

\begin{figure}[h]
    \centering
    \includegraphics[width=0.4\textwidth]{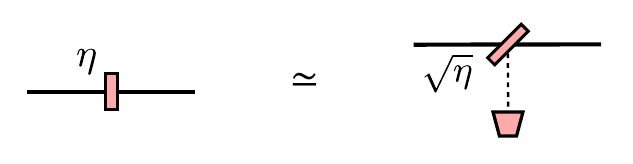}
    \caption{The beamsplitter model of losses. On the left we have the photonic mode passing through a lossy element with survival probability $\eta$. On the right we have an equivalent circuit where the lossy element is replaced by a beamsplitter with transmission probability $\sqrt{\eta}$, in which case each photon in the input is routed into the unmeasured mode with probability $1-\eta$.}
    \label{fig:bslosses}
\end{figure}

\paragraph*{Uniform beamsplitter loss model---}  In the beamsplitter loss model, whenever a photon reaches a lossy element, it is lost with some probability $(1-\tran)$. For example, a state $\ket{1}$ with a single photon, when passing through a lossy element, leaves in state $(1-\eta) \ketbra{0}{0} + \eta \ketbra{1}{1}$. We call this the beamsplitter model as it is equivalent to replacing the loss by a beamsplitter with transmissivity $\sqrt{\eta}$, as in \cref{fig:bslosses}. This is the most common way of modeling losses in the quantum optics literature \cite{Barnett1998,Demkowicz2015}.  If losses are mode-independent they can be conveniently described in the first quantization picture\footnote{For the formal proof of \cref{eq:BSuniformMODEL}  see the appendix of  \cite{Oszmaniec2016}.}:  $n$-particle states are transformed by the channel $\Lambda_\tran$,
\begin{equation}\label{eq:BSuniformMODEL}
\rho \mapsto \Lambda_\tran \L \rho \R \eqdef \sum_{l=0}^n \binom{n}{l} \tran ^l (1-\tran)^{n-l} \tr_{n-l} \L \rho \R\ ,
\end{equation}
where $\tran^l (1-\tran)^{n-l}\binom{n}{l}$ is the probability that exactly $l$ particles remain. Comparing \cref{eq:fixedLOSSdef}  and \cref{eq:BSuniformMODEL} we see that the fixed-loss and uniform beamsplitter loss models are closely connected, which we use often in this work. In particular, in \cref{sec:modelequiv} we show that in certain regimes of parameters $l$ and $\eta$ they are also \emph{computationally} equivalent from the perspective of complexity theory. 

\paragraph*{General lossy optical networks---} To analyze the case when losses affect each mode differently we turn our attention to general passive linear-optical networks. This provides a convenient mathematical framework to formulate this question, but also corresponds to what is commonly done in experiments. Since arbitrary multimode transformations are not readily available, the simplest way to implement a complex $m$-mode linear-optical operation is to decompose it as a network (or circuit) of elements acting on a few modes at a time, such as in \cite{Reck1994,Clements2016}. If each of the smaller elements has some loss associated with it, this might lead to a distribution of losses that affects a photon differently depending on which path it takes within the network, and which is determined by the overall geometry of the circuit.  

Throughout most of this work we assume that a linear-optical network is composed entirely of 2-mode transformations, i.e.\ beamsplitters and phase shifters. We further assume, for simplicity, that losses are caused only by beamsplitters, disregarding those associated with phase shifters, detectors, sources and transmission paths between beamsplitters, and furthermore that each beamsplitter induces the same loss probability $(1-\eta)$ in each of its input arms. Thus, the building block of a network is the element depicted in \cref{fig:buildingblock}, where red elements represent pure losses (as in \cref{fig:bslosses}) and blue collectively represents beamsplitters and any neighboring phase shifters (although we refer to them only as beamsplitters from hereon). From \cref{eq:commutationOFlosses} and \cref{eq:BSuniformMODEL} it follows that a round of uniform losses on $m$ modes commutes with an $m$-mode passive linear-optical transformation, and in particular this holds for $m=2$ as shown in \cref{fig:buildingblock}. For a network of arbitrary geometry, however, the analysis is more complicated since it is not immediately obvious to what extent losses can be commuted within the network. We return to this question in \cref{sec:allmainresults}.

\begin{figure}
    \centering
    \includegraphics[width=0.4\textwidth]{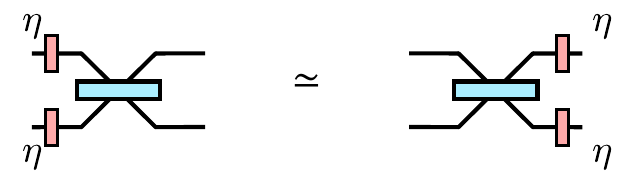}
    \caption{The building block of a linear-optical network used throughout the paper. Red elements are loss elements defined as in \cref{fig:bslosses}, while blue are lossless two-mode linear-optical transformations, including beamsplitters and phase shifters. It is not hard to show that a round of uniform losses on $m$ commutes with any linear optics on the same $m$ modes, and here we depict this for $m=2$.}
    \label{fig:buildingblock}
\end{figure}


We will describe the action of a general passive linear-optical network $\mathcal{N}$ by a channel $\Lambda_\mathcal{N}$, which can be obtained by composition of the corresponding channels for loss elements and beamsplitters. The explicit description of $\Lambda_\mathcal{N}$ in terms of \cref{eq:BSuniformMODEL} and \cref{eq:firstQUANTtran} is possible in principle, but can be very cumbersome. For a more practical description, we direct the interested reader to a formalism that describes it in terms of the action of linear-optical transformations on coherent states \cite{RahimiKeshari2013}.

\subsection{Boson sampling}\label{sec:BosSamProp}

Here we consider the simulation of general lossy linear optics, but the main setup we consider is the boson sampling model of quantum computing \cite{Aaronson2013a}. The ideal boson sampling instance is defined by three ingredients: (i) preparation of an $n$-photon, $m$-mode input state, (ii) application of an $m$-mode unitary linear-optical transformation $U$, and (iii) measurement of occupation number in every output mode. This is exactly embodied in the physical setup described in \cref{fig:linearoptics}. From the action of $U$ on  bosonic operators [cf.\ \cref{eq:secondQUANTtran}], it is easy to show that the transition probability from some input state $\ket{\n}$ to some output state $\ket{\mathbf{p}}$ is given by 

\begin{equation} \label{eq:permanentboson}
\textrm{Pr}(\n \rightarrow \mathbf{p}) = \frac{\left\vert\textrm{Per}(U_{\n,\mathbf{p}})\right\vert^2}{n_1! \ldots n_m! p_1! \ldots p_m!},
\end{equation}
where $U_{\n,\mathbf{p}}$ is an $n \times n$ submatrix of $U$ constructed by repeating $n_i$ times the $i$th column of $U$, and $p_j$ times its $j$th row \cite{Scheel2004,Aaronson2013a}, and Per$(A)$ denotes the permanent of matrix $A$ \cite{Valiant1979}. Throughout this paper we assume that the input is the state $\ket{\true}$ from \cref{eq:defINPUTstate}, i.e.\ that the photons initially occupy the first $n$ modes, without loss of generality. 

In \cite{Aaronson2013a} the authors showed that, modulo certain complexity-theoretic conjectures, it would be impossible for a classical computer to efficiently simulate the output of this idealized process, even in an approximate sense. We defer a more technical discussion of the results of \cite{Aaronson2013a} to \cref{sec:bosonsampling}, when we discuss the complexity-theoretic consequences of our own results. For now it suffices to say that the existence of an efficient classical algorithm capable of simulating a distribution sufficiently close to the ideal boson sampling one would imply a collapse of the polynomial hierarchy to the third level, which is considered a very unlikely outcome in the complexity theory literature.

The results of \cite{Aaronson2013a} concern the possibility of classical simulation of boson sampling, but there are several notions of simulation one can use. A \emph{strong} simulation of a quantum process is a classical algorithm that outputs the probability of any outcome, which may also include marginal probabilities. A \emph{weak} simulation is a classical algorithm that produces a sample from the same distribution as the quantum device. It is generally regarded that, from the point of view of assessing whether some quantum device outperforms classical computers, the notion of weak simulation is more adequate, as it requires the classical algorithm to perform the same task as the quantum device. 

Within the notion of weak simulation we can consider an exact simulator (which produces samples from the exact ideal distribution) or an approximate simulator. Given the ideal distribution ${p_x}$, an approximate simulator samples from some distribution ${q_x}$ such that
\begin{equation} \label{eq:TVDbs}
\frac{1}{2} \sum_x |p_x - q_x|\leq \epsilon
\end{equation}
for some $\epsilon > 0$. In the original boson sampling paper \cite{Aaronson2013a}, the authors require a classical algorithm to simulate the quantum device only in the approximate weak sense described above. Furthermore, their result also requires that the runtime of the classical algorithm is poly$(n, 1/\epsilon)$. In other words, the classical algorithm should be able to give better approximations at the cost of only polynomial more time. 

In our main results of \cref{sec:allmainresults}, by contrast, the classical algorithm is only able to perform weak simulation up to some threshold precision $\epsilon(n)$. We cannot improve the simulation simply by spending more computational time, but under some conditions the precision of our algorithm improves with the size of the problem. This means it is a different notion of approximation from the one used e.g.\ in \cite{Aaronson2013a}, although other recent papers use similar definitions to ours \cite{Bremner2016}. We discuss these distinctions and complexity-theoretic implications of our results in \cref{sec:bosonsampling}. For a more thorough discussion on this subject, see \cite{Nest2011,Pashayan2017}.

One last important thing to point out is that there are cases where boson sampling is known to be classically simulable, such as when inputs and measurements are in the Gaussian basis \cite{Bartlett2003}, or when there is a combined high rate of losses in the system and dark counts in the detectors \cite{Rahimi-Keshari2016}.  One such situation which will be important later on is when photons are perfectly \emph{distinguishable} (e.g.\ due to some undetected degree of freedom such as polarization). In this case the transition probabilities are given by
\begin{equation} \label{eq:permanentclassical}
\textrm{Pr}(\n \rightarrow \mathbf{p}) = \frac{\textrm{Per}(\left\vert U_{\n,\mathbf{p}}\right\vert^2)}{p_1! \ldots p_m!},
\end{equation}
where $\left\vert U_{\n,\mathbf{p}}\right\vert^2$ is obtained from $U_{\n,\mathbf{p}}$ by taking the absolute value squared of each of its elements. This equation is surprisingly similar to \cref{eq:permanentboson}, except that now the matrix has only positive elements. This, however, is crucial in enabling an efficient classical simulation. One way to see this is to note that, although the permanent is a hard function in general, for matrices with only positive elements it can be approximated efficiently by a classical algorithm due to Jerrum, Sinclair and Vigoda \cite{Jerrum2004}. Alternatively, if one is satisfied with a weak simulation, one can do it simply by sampling the outcomes of the photons one at a time, as they do not interfere.
The first-quantization counterpart of this statement is that distinguishable particles correspond, in the simulation of \cref{fig:first_second_quant}(a), to particles that do not undergo the symmetrization procedure. So, if all photons are distinguishable, the system simply evolves as a number of parallel $m$-level systems. 

Although we do not actually consider partial photon distinguishability, note that the bosons in a particle-separable state such as in \cref{eq:partSEP} can be considered as \emph{effectively} distinguishable. To see this, first let $T$ be linear-optical transformation that takes a single photon from state $\ket{1}$ into state $\ket{\phi}$. Then write
\begin{equation}
(\kb{\phi}{\phi})^{\otimes n} = T^{\otimes n} (\kb{1}{1})^{\otimes n} (T\dg)^{\otimes n}.
 \end{equation}
That is, any symmetric \emph{product} state is equivalent to initializing all photons in the same mode and applying some linear-optical transformation. But now note that if all photons start in the same mode, the submatrix constructed in \cref{eq:permanentboson} has only $n$ repeated columns, and in this case the permanent collapses simplify to a combinatorial factor multiplied by the product of all elements in that column. This, in turn, becomes identical to same probability defined for distinguishable particles in \cref{eq:permanentclassical}. Thus, we conclude that any bosonic particle-separable state effectively behaves as classical mixture of states of distinguishable particles, which is crucial for our simulation scheme in what follows. In the first-quantization picture, this statement corresponds to the fact that the only $n$-particle state that is symmetric and where all particles are in the same state is a product state.

\section{General outline of approximate classical simulation}\label{sec:genARG}

Our classical simulation results all follow from the same general reasoning, which we present in this section. Let us first introduce two definition of distance to measure the quality of our simulations. The first is the \emph{total variation distance}, which measures the distance between two probability distributions $P=\{p_x\}$  and $Q=\{q_x\}$ and can be written as 
\begin{equation}\label{eq:TV distance}
\left \lVert P-Q \right \lVert \eqdef \frac{1}{2} \sum_x \lvert p_x - q_x \lvert\  ,
\end{equation}
The total variation distance is the typical figure of merit in the context of boson sampling, as in \cref{eq:TVDbs}. The second measure  is the \emph{trace distance} between two quantum states $\rho$ and $\sigma$
\begin{equation}\label{eq:TRACEdist}
\dtr(\sigma,\rho) \eqdef \frac{1}{2} \mathrm{tr}\left[ \sqrt{(\rho- \sigma)\dg(\rho- \sigma)}\right].
\end{equation}
The trace distance can be also defined \cite{Nielsen2010} as the maximum, over all POVMs $\lbrace M_x \rbrace$, of the total variation distance between the probability distributions $\D_\rho$ and $\D_\sigma$ that arise when one measures $M_x$ on states $\rho$ and $\sigma$ respectively. Therefore, all such probability distributions $\D_\rho$ and $\D_\sigma$ satisfy
\begin{equation}\label{eq:UPPboundTV}
\left \lVert \D_\rho- \D_\sigma \right \lVert \leq \dtr(\sigma,\rho)\ .
\end{equation}

We are now ready to outline our classical simulation scheme. In all cases, we are interested in simulating a linear-optical protocol (as in \cref{fig:linearoptics}) that uses as input some state $\rho$, corresponding to the action of some lossy channel on the initial state $\ket{\Psi_n}$. 
The simulation is achieved by constructing an alternative state, $\sigma$, with the following properties (see \cref{fig:SEP})
\begin{itemize}
	\item[(i)] $\sigma$ is as a convex combination of particle-separable states (possibly with different numbers of particles):
	\begin{equation}
	\sigma = \sum_{l=0}^\infty p_l \sigma_l\ ,
	\end{equation}
	with $\sigma_l \in \Sep\L \symm{l}{m} \R$, and so $\sigma\in\Sep\L \SYM{m}  \R $. Every $\sigma_l$ can be further decomposed as $\sigma_l= \sum_{\alpha} q^{l}_\alpha \kb{\phi_{l,\alpha}}{\phi_{l,\alpha}}^{\otimes l}$.  We also require that the \emph{joint} probability distribution ${p_l q^{l}_\alpha}$ can be efficiently sampled on a classical computer.
	\item[(ii)] The trace distance $\dtr(\sigma,\rho)$ between $\sigma$ and $\rho$ is small, in a sense to be made precise later. 
\end{itemize}

\begin{figure}
    \centering
    \includegraphics[width=0.3\textwidth]{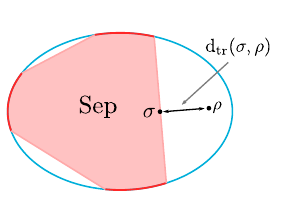}
    \caption{A schematic view of our simulation strategy. The set $\Sep$ is the subset of separable states within some state space of interest, and $\sigma$ is the closest state in $\Sep$ to the lossy state $\rho$.}
    \label{fig:SEP}
\end{figure}

Let $\DD(\rho;\Lambda)$ and $\DD(\sigma;\Lambda)$ be the probability distributions obtained by applying some linear-optical channel $\Lambda$ to $\rho$ and $\sigma$, respectively, and measuring the resulting state in the particle-number basis. By \cref{eq:UPPboundTV}  we have $\left \lVert\DD(\rho;\Lambda)-\DD(\sigma;\Lambda) \right \lVert \leq \dtr(\sigma,\rho)$. In other words, simulating any linear-optical channel $\Lambda$ acting on $\sigma$ will give a simulation of the corresponding channel acting on $\rho$, up to some error determined by $\dtr(\sigma,\rho)$. In the following sections $\Lambda$ might be $\Lambda_U$, $\Lambda_\eta$ or $\Lambda_\mathcal{N}$, as appropriate from context (cf.\ definitions in \cref{sec:defCONC}).

Our claim now is that sampling from the distribution $\DD(\sigma;\Lambda)$ can be done efficiently on a classical computer. We restrict our attention to \emph{weak} simulation, i.e.\ the task of classically producing a sample from the correct distribution. This can be done in three steps. First, we draw a pair $(\alpha,l)$ according to the probability distribution $\{p_l q^l_{\alpha}\}$, which can be done efficiently by assumption. Second, it is easy to simulate the effect of the linear-optical channel $\Lambda$ on $\kb{\phi_{l,\alpha}}{\phi_{l,\alpha}}^{\otimes l}$.  Indeed, in the first-quantization formalism [cf.\ \cref{eq:firstQUANTtran}] any linear-optical channel that we care about will act as $U^{\otimes l}$, i.e.\ as a tensor product of single-particle unitaries\footnote{At worse, the linear-optical channel will act as some channel $\Lambda$, but that can easily be described as a linear-optical transformation on at twice the number of modes.}. Finally, since the particle-number measurements can also be implemented by a product of single-particle measurements, the last step correspond simply to simulating $\O(n)$ independent parallel $m$-level quantum systems, which can be trivially be done on a classical computer.

Combining these observations, it is clear this procedure efficiently samples from a distribution closer than $\dtr(\sigma,\rho)$ in total variation distance to the desired one $\DD(\rho;\Lambda)$. In \cref{sec:allmainresults} we obtain explicit expressions for the separable states and their trace distances to the desired state under different physically-motivated loss models, and in \cref{sec:bosonsampling} we return to the complexity-theoretic consequences of our simulations and further interpretations. 

\subsection{Computational equivalence of loss models}  \label{sec:modelequiv}

Our main results include claims about the loss models described in \cref{sec:bslosses}. There we described how the models in which losses are mode-independent are mathematically related, namely via the binomial distribution of \cref{eq:BSuniformMODEL}. From that expression it is clear that, in the beamsplitter model with uniform transmission probability $\eta$ per mode, the average number of lost photons is $(1- \eta) n$. Thus, if we set $\eta = 1- k/n$, we obtain any instance of the fixed-loss model with exactly $k$ losses as the \emph{expected} outcome in the beamsplitter model. However, that does not suffice to prove that the models are \emph{computationally} equivalent. To achieve that, we need to show under which conditions one model can simulate the other efficiently, i.e.\ with only polynomial (in $n$) overhead. This will allow us to re-purpose results obtained from one model into the other.

Consider two black boxes that simulate lossy $n$-photon boson sampling experiments. The first, \textsc{BS}, outputs samples according to the beamsplitter model for some uniform loss probability $(1- \eta)$ per photon, per mode. The second, \textsc{FL}, outputs samples according to the fixed-loss model with $k$ losses, for some range of values of $k$. The formal question is whether we can use poly$(n)$ samples produced by one black box to output a single sample from the distribution generated by the other (with high probability), and what are the corresponding asymptotic behaviors of $k$ and $\eta$.

To prove that \textsc{BS} can simulate \textsc{FL} note that, for large $n$, we can use Stirling's approximation to write
\begin{align}
\textrm{Pr} (n \eta) & = \binom{n}{n \eta} \eta^{n \eta} (1- \eta)^{n(1-\eta)} \notag \\
& \approx \sqrt{\frac{1}{2\pi \eta(1- \eta)n}}.
\end{align}
That is, since $\eta\in[0,1]$, the probability of observing exactly the expected number of losses is O$(1/\sqrt{n})$. Thus, if we set $\eta = 1 - k/n$, we only need O$(\sqrt{n})$ samples from \textsc{BS} in order to obtain one sample from \textsc{FL} with high probability. This was used in \cite{Aaronson2016} to extend their hardness claim from the fixed-loss model [i.e.\ that it is computationally as hard as ideal boson sampling if $k = \textrm{O}(1)$] carried over to the beamsplitter model in the regime $1- \eta = \textrm{O}(1/n)$.

Consider now the reverse question, i.e.\ when can \textsc{FL} simulate \textsc{BS}? We already saw that, by setting $k = (1- \eta) n$, \textsc{FL} samples from a distribution with the same number of photons as the average of the \textsc{BS} distribution, but that is not enough. We also need to show that the distribution over photon numbers produced by \textsc{BS}, i.e.\ the distribution of \cref{eq:BSuniformMODEL}, is sufficiently concentrated around its mean. If that holds, then a \textsc{FL} black box that samples from events with O$(\eta n)$ transmitted photons can approximately simulate a \textsc{BS} black box by first sampling $k$ according to \cref{eq:BSuniformMODEL} truncated to a small range of values around the mean, and then outputting a distribution for that value of $k$.

The number of transmitted photons can be described by a random variable $X = \sum_{i=1}^{n} X_i$, where each $X_i$ is an independent random variable which takes value 0 is the $i$th photon was lost and 1 otherwise. Clearly, $X$ has the expectation $E(X) = \eta n$. Then, by a standard multiplicative Chernoff bound \cite{Chernoff1952} we can write:
\begin{equation}\label{eq:Chernoff}
\begin{aligned} 
\textrm{Pr} (X \leq (1 - \delta) \eta n) & \leq e^{- \frac{\delta^2 \eta n}{2} }, \\
\textrm{Pr} (X \geq (1 + \delta) \eta n) & \leq e^{- \frac{\delta \eta n}{3} }.
\end{aligned}
\end{equation}
We can repeat this argument for $Y = \sum_{i=1}^{n} (1-X_i)$, which is the number of lost photons and has expectation value $E(Y) = n (1- \eta)$.  The Chernoff bounds for $X$ and $Y$ show that, in the extremal regimes where the photons are either very likely to survive or to be lost (specifically if either $n (1- \eta)$ or $n \eta$ go to 0 sufficiently fast with $n$), then the number of photons at the end of the circuit concentrates around its average as $n$ increases. This happens, for instance, if the average number of photons lost is constant (as in \cite{Aaronson2016}), or if there are only an average of o$(\sqrt{n})$ photons left after the losses (as is the regime of our main results in following sections).

\section{Simulation of lossy linear optics} \label{sec:allmainresults}

In this Section we describe our main results, namely under which loss regimes linear optics can be simulated classically by approximating the lossy states by with particle-separable states. We defer most technical proofs to \cref{sec:mainProofs}, rather focusing here on the interpretation and consequences of the main theorems.

We begin in \cref{sec:resFIXED} by considering lossy linear optics in the fixed-loss model described in \cref{sec:bslosses}. Although this model is less realistic from a physical standpoint, it is mathematically elegant and leads to more straightforward results.

In subsequent sections we focus on lossy linear-optical \emph{networks}, defined in \cref{sec:bslosses}. Two common examples of these networks are depicted in \cref{fig:balanced_unbalanced} \cite{Clements2016,Reck1994}. These networks model good approximations for experiments based on integrated photonics such as \cite{Broome2013,Crespi2013b,Spring2013,Tillmann2013,Carolan2015}.

\begin{figure}[h]
    \centering
    \includegraphics[width=0.8\textwidth]{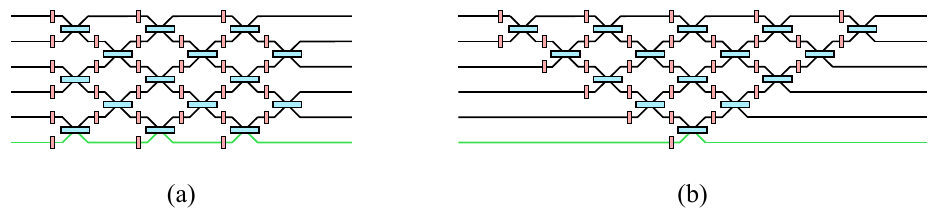}
    \caption{Two universal decompositions of linear-optical transformations. The building block in these diagrams and all that follow is as defined in \cref{fig:buildingblock}. The circuit in (a) does not have perfectly balanced losses because of the first and last modes, but that is a minor effect as the dimension of the unitary increases. The circuit in (b) is very unbalanced, since one mode always traverses a single lossy element, while there is another which traverses roughly $m$. The green paths indicate one example of a shortest path that a photon can take in the circuit, as will be used in \cref{thm:PULLoutNOISE}. We caution the reader not to confuse the red elements with phase shifters, as this figure is similar to representations of lossless linear-optics used in the literature.}
        \label{fig:balanced_unbalanced}
\end{figure}

In \cref{sec:unif} we initially address losses for ``uniform'' (or ``balanced'') networks, namely those where every mode is affected by losses in the exact same amount. This is a simpler case where loss elements commute with unitary transformation [cf.\ \cref{sec:bslosses}], and so all losses can be assumed to happen at the input. The results in this section rely heavily on the Chernoff bounds from \cref{sec:modelequiv} and the results from the fixed-loss model.

Finally,  in \cref{sec:nonUNIF} we justify the use of the uniform loss model for networks with arbitrary geometries, such as \cref{fig:balanced_unbalanced}(b). To this end we prove that it is possible to ``extract'' a certain amount of uniform losses from an arbitrary network. The relevant parameter for how much loss can be pulled out of the network is the minimal number of beamsplitters that a photon must traverse between any input and any output.

\subsection{Simulation in the fixed-loss model}\label{sec:resFIXED}

Suppose the the input of the bosonic state is $\ket{\true}$ (i.e.\ the first $n$ out of $m$ modes are occupied by a single particle each). Suppose then that we trace $n-l$ out of the $n$ particles, corresponding to losing exactly $n-l$ photons at the input [cf.\ \cref{sec:bslosses}]. We can write the resulting state as
\begin{equation} \label{eq:target}
\targ{l}= \frac{1}{\binom{n}{l}}\sum_{\n:\ |\n|=l,\ n_i \leq 1} \ketbra{\n}{\n}\ ,
\end{equation}
where the sum is over all occupation-number vectors of length $|\n|=l$ satisfying $n_i \leq 1$. We are now interested in finding the {\em separable} symmetric state $\sigma^* \in \Sep( \Hsym_{l,n} )$ that is closest in trace distance to $\targ{l}$. In other words we seek for $\sigma_\ast$ satisfying $\dtr(\sigma_\ast,\targ{l})= \dsep{l}$, where
\begin{equation}\label{eq:distanceKN}
\dsep{l} \eqdef \min_{\sigma\in\Sep(\Hsym_{l,n})} \dtr(\sigma,\targ{l})\ .  
\end{equation} 
As discussed in \cref{sec:genARG}, if we find $\sigma_\ast$ such that $\dsep{l}$ is suitably small, we can use $\sigma_\ast$  to simulate any boson sampling task performed on the state $\targ{l}$ up to some precision that depends on $\dsep{l}$. The following result establishes precisely the exact dependence of $\dsep{l}$ on $n$ and  $l$. We defer the proof of to \cref{sec:proofDIST}. Let us just state it for now, to investigate its basic consequences.

\begin{theorem}[Approximation of the lossy input state by a particle-separable state] \label{thm:main1}
Let $\targ{l}$ be the lossy bosonic state defined in \cref{eq:target}.  The trace distance $\dsep{l}$ of $\targ{l}$ to the set of symmetric separable $l$-particle states $\Sep(\Hsym_{l,n})$ is given by
\begin{equation}\label{eq:distanceValue}
\dsep{l}=1 - \frac{n!}{n^{l} (n-l)!}\ . 
\end{equation} 
Moreover, a separable state $\sigma_\ast\in \Sep(\Hsym_{l,n})$ that attains $\dsep{l}$ is
\begin{equation}\label{eq:optstateL}
\sigma_\ast=\frac{1}{(2\pi)^n}\int_0^{2\pi}  d \varphi_1 \ldots  \int_0^{2\pi}d  \varphi_n  \left( V_{\varphi_1,\ldots,\varphi_n} \ketbra{\phi_0}{\phi_0} V_{\varphi_1,\ldots,\varphi_n}\dg \right) ^{\otimes l}\ ,
\end{equation}
where $\ket{\phi_0}\eqdef (1/\sqrt{n})\sum_{i=1}^n \ket{i}$ and $V_{\varphi_1,\ldots,\varphi_n} \eqdef \exp\left(- \ii \sum_{i=1}^n \varphi_i \ketbra{i}{i} \right) $.
\end{theorem}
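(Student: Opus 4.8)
The plan is to prove \cref{eq:distanceValue} by sandwiching $\dsep{l}$ between matching lower and upper bounds: a lower bound coming from a single well-chosen measurement, and an upper bound obtained by evaluating $\dtr(\sigma_\ast,\targ{l})$ for the explicit candidate $\sigma_\ast$ of \cref{eq:optstateL}. The organizing observation is that $\targ{l}$ is, up to normalization, the projector $P$ onto the ``collision-free'' subspace $W\subset\symm{l}{n}$ spanned by the Dicke states $\ket{\n}$ with $|\n|=l$ and $n_i\le 1$; that is, $\targ{l}=P/\binom{n}{l}$.

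For the lower bound I would apply \cref{eq:UPPboundTV} to the binary measurement given by the projector $P$. Since $\Tr[P\,\targ{l}]=1$, every separable $\sigma$ obeys $\dtr(\sigma,\targ{l})\ge 1-\Tr[P\sigma]$, and because $\sigma\mapsto\Tr[P\sigma]$ is linear it is maximized over $\Sep(\Hsym_{l,n})$ at an extreme point, i.e.\ at a pure product state $\ketbra{\phi}{\phi}^{\otimes l}$. Hence
\begin{equation}
\dsep{l}\ \ge\ 1-\max_{\ket{\phi}}\,\bra{\phi}^{\otimes l}P\ket{\phi}^{\otimes l}\ .
\end{equation}
Writing $\ket{\phi}=\sum_{i=1}^n c_i\ket{i}$ with $\sum_i|c_i|^2=1$ and using $\langle\phi^{\otimes l}|\n\rangle=\sqrt{l!}\,\prod_{j:\,n_j=1}\bar c_j$ for collision-free $\n$, one finds $\bra{\phi}^{\otimes l}P\ket{\phi}^{\otimes l}=l!\,e_l\!\left(|c_1|^2,\ldots,|c_n|^2\right)$, where $e_l$ is the degree-$l$ elementary symmetric polynomial. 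The problem thus reduces to maximizing $e_l$ over the probability simplex. The analytic heart of the argument is that this maximum sits at the uniform point $|c_i|^2=1/n$: the elementary symmetric polynomials are Schur-concave, and the uniform vector is majorized by every probability vector, so it maximizes any Schur-concave symmetric function (equivalently, a Lagrange-multiplier computation shows $|c_i|^2=1/n$ is the unique interior critical point). This gives $\max e_l=\binom{n}{l}n^{-l}$ and hence $\max_{\ket\phi}\bra{\phi}^{\otimes l}P\ket{\phi}^{\otimes l}=l!\binom{n}{l}n^{-l}=n!/[(n-l)!\,n^l]$, so $\dsep{l}\ge 1-n!/[(n-l)!\,n^l]$.

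For the matching upper bound I would show that the phase averaging in \cref{eq:optstateL} diagonalizes $\sigma_\ast$ in the Dicke basis. Expanding $\ketbra{\phi_\varphi}{\phi_\varphi}^{\otimes l}$ with $\ket{\phi_\varphi}=n^{-1/2}\sum_i e^{-\ii\varphi_i}\ket{i}$ and integrating each $\varphi_i$ independently over $[0,2\pi)$ annihilates every term except those whose input and output multi-indices share the same occupation type, leaving
\begin{equation}
\sigma_\ast=\frac{1}{n^l}\sum_{\n:\,|\n|=l}\frac{l!}{n_1!\cdots n_n!}\,\ketbra{\n}{\n}\ .
\end{equation}
Since $\targ{l}$ and $\sigma_\ast$ are now simultaneously diagonal in the orthonormal Dicke basis, $\dtr(\sigma_\ast,\targ{l})$ collapses to the total variation distance between the corresponding probability vectors over $\n$. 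On each collision-free $\n$ the weight $l!/n^l$ of $\sigma_\ast$ is no larger than the weight $1/\binom{n}{l}$ of $\targ{l}$, while on all other $\n$ the weight of $\targ{l}$ vanishes; summing the two contributions (using $\sum_{\text{collision-free }\n}l!/n^l=\binom{n}{l}l!/n^l=n!/[(n-l)!\,n^l]$) yields $\dtr(\sigma_\ast,\targ{l})=1-n!/[(n-l)!\,n^l]$, which meets the lower bound and simultaneously certifies that $\sigma_\ast$ is optimal.

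The step I expect to be the genuine obstacle is the simplex optimization in the lower bound, where the nontrivial inequality (Maclaurin/Schur-concavity of $e_l$) is needed to rule out all competing separable states. Everything else is bookkeeping: the upper bound reduces to an elementary total-variation computation once the phase average has been carried out, and the reduction of the trace distance to a classical distance over occupation numbers is immediate from the shared Dicke-diagonal structure.
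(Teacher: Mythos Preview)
Your proof is correct and takes a genuinely different---and arguably more economical---route than the paper. The paper's argument is structural: it first invokes the data-processing inequality to show that the optimal separable approximation can be taken invariant under the twirls $\Lambda_{K_c}$ (phases) and $\Lambda_{K_d}$ (mode permutations), then characterizes the general twirled product state, observes that the resulting trace distance is $1-\sum_k p_k\, l!\, m_{\t_0}(\alpha_k)$ (so a pure seed suffices), and only at the end maximizes $l!\, m_{\t_0}(\alpha)$ via Schur-concavity. You bypass the entire symmetry-reduction machinery by sandwiching $\dsep{l}$ directly: the projector $P$ onto the collision-free subspace furnishes a variational lower bound $1-\max_{\ket{\phi}}\Tr[P\,\kb{\phi}{\phi}^{\otimes l}]$, and the explicit candidate $\sigma_\ast$ gives a matching upper bound once phase averaging makes it Dicke-diagonal. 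The analytic core---maximizing $l!\,e_l(|c_1|^2,\ldots,|c_n|^2)$ over the simplex via Schur-concavity---is identical in both proofs; indeed $\Tr[P\,\kb{\phi}{\phi}^{\otimes l}]$ is precisely the paper's $q(\kb{\phi}{\phi})$. What your approach buys is brevity and transparency: one witness, one candidate, done. What the paper's approach buys is a clearer picture of \emph{why} the optimizer must look like $\sigma_\ast$, and a template (the stabilizer-twirl reduction) that generalizes to other target states with large symmetry groups.
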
 

Let us look at the asymptotic behavior of $\dsep{l}$ in the limit of large and small losses.  Using Stirling's approximation we write $\log \left(\frac{n!}{n^{l} (n-l)!} \right) = -\frac{l^2}{2n}+ \o \left(\frac{l^3}{n^2} \right)$, which leads to the following immediate result.

\begin{corollary}
After $n-l$ losses we have 
\begin{equation}
  \dsep{l}= \frac{l^2}{2n} +\o \left(\frac{l^3}{n^2} \right)\ .
\end{equation}
Therefore, if all but $l = \o (\sqrt{n})$ photons are lost, $\dsep{l}$ asymptotically tends to $0$. Conversely, if $l=\omega(\sqrt{n})$, then $\dsep{l}$ tends to 1.
\end{corollary}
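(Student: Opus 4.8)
The plan is to treat this as a direct asymptotic analysis of the exact formula \cref{eq:distanceValue} handed to us by \cref{thm:main1}, namely $\dsep{l} = 1 - A$ with $A \eqdef \tfrac{n!}{n^l(n-l)!}$. The first move is to rewrite $A$ as the product $A = \prod_{k=0}^{l-1}\bigl(1 - \tfrac{k}{n}\bigr)$, which is immediate from $\tfrac{n!}{(n-l)!} = \prod_{k=0}^{l-1}(n-k)$. This form is the natural starting point because it turns $\log A$ into a sum over terms depending only on the ratios $k/n$.

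I would then take logarithms and expand. Using $\log(1-x) = -x - \tfrac{x^2}{2} - \cdots$ together with $\sum_{k=0}^{l-1}k = \tfrac{l(l-1)}{2}$ gives $\log A = -\tfrac{l^2}{2n} + (\text{lower order})$, which is precisely the Stirling-type estimate quoted just before the statement. Exponentiating and using $1 - e^{x} = -x + \O(x^2)$ for small $x$ then produces $\dsep{l} = 1 - A = \tfrac{l^2}{2n} + \o\bigl(\tfrac{l^3}{n^2}\bigr)$, the leading-order formula of the corollary.

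The two limiting claims I would extract by tracking the sign and size of $\log A$ with one-sided bounds rather than the full expansion, which is both cleaner and sidesteps the range-of-validity issue below. For $l = \o(\sqrt n)$ I would sandwich $-\tfrac{l^2}{2n} - \tfrac{l^3}{3n^2} \le \log A \le -\tfrac{l(l-1)}{2n}$, where the upper bound is $\log(1-x) \le -x$ and the lower bound follows from $\log(1-x) \ge -x - x^2$ for $x \le \tfrac12$ together with $\sum_{k=0}^{l-1}k^2 \le \tfrac{l^3}{3}$; since $l^2/n \to 0$ forces both sides to $0$, we get $A \to 1$ and $\dsep{l} \to 0$. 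For $l = \omega(\sqrt n)$ the single inequality $\log(1-x)\le -x$ already does the job: it gives $\log A \le -\tfrac{l(l-1)}{2n} \to -\infty$, hence $A \to 0$ and $\dsep{l}\to 1$.

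The only real subtlety --- and what I would flag as the main obstacle --- is the regime dependence of the error terms. The approximation $\dsep{l}\approx \tfrac{l^2}{2n}$ is a genuine asymptotic equality only while $l = \o(\sqrt n)$: $\dsep{l}$ is a trace distance and so is bounded by $1$, whereas $\tfrac{l^2}{2n}$ diverges once $l=\omega(\sqrt n)$. For that reason the $\dsep{l}\to 1$ conclusion must be argued from the one-sided bound on $\log A$ directly, rather than by pushing the leading-order expansion beyond its domain of validity.
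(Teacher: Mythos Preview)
Your proposal is correct and follows essentially the same route as the paper: both start from the exact formula of \cref{thm:main1} and pass to $\log A$ to extract the leading $-l^2/(2n)$ behavior, with the paper invoking Stirling's approximation in a single sentence while you carry out the equivalent (and slightly more elementary) product expansion $A=\prod_{k=0}^{l-1}(1-k/n)$ explicitly. Your separate treatment of the two limiting regimes via one-sided bounds on $\log(1-x)$, together with your flag that the leading-order formula is only meaningful for $l=\o(\sqrt{n})$, is more careful than the paper's terse ``immediate result'' and is a genuine improvement in rigor, but it is elaboration of the same idea rather than a different argument.
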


All that remains is to show that it is indeed possible to use $\sigma^*$ efficiently in a classical simulation. But this is quite straightforward. The state $\sigma^*$ is obtained by starting with the state 
\begin{equation*}
\ket{\phi}^{\otimes  l} = \left(\frac{1}{\sqrt{n}}\sum_{i=1}^n e^{i \varphi_i}\ket{i}\right)^{\otimes l},
\end{equation*}
and averaging over all phases $\varphi_i$ uniformly in the $[0, 2\pi]$ interval. A \emph{weak} simulation then consists of choosing a set of phases $\varphi_i$ uniformly at random, then simulating the action of $U^{\otimes l}$ on the $\ket{\phi}^{\otimes l}$, together with the final single-qudit computational basis measurements. This is simply the simulation of $l$ parallel $m$-level systems\footnote{Recall that in general the total number of modes $m$ is greater than the number of modes $n$ on which the state $\targ{l}$ is supported.}, and can be done efficiently on a classical computer.  Combining this  with the discussion in \cref{sec:genARG} we obtain the following result.

\begin{corollary} \label{cor:fixed}
Let $\DD(\targ{l},\Lambda_U)$ be the probability distribution obtained by applying the linear-optical transformation $U$ to state $\targ{l}$ of \cref{eq:target} and measuring in the particle-number basis. There exists a probability distribution $\DD(\sigma_\ast,U)$ that can be classically sampled from efficiently and such that $\left \lVert\DD(\targ{l};\Lambda_U)-\DD(\sigma_\ast;\Lambda_U) \right \lVert \leq \dsep{l}$, where $\dsep{l}$ is given in \cref{eq:distanceValue}. Moreover, in the limit  $l=\o(n^{1/2})$ we have $\dsep{l}\approx \frac{l^2}{2n}$. 
\end{corollary}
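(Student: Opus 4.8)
The plan is to obtain \cref{cor:fixed} by assembling three ingredients already in place: the identification of $\sigma_\ast$ and the value $\dtr(\sigma_\ast,\targ{l})=\dsep{l}$ from \cref{thm:main1}, the general simulation template of \cref{sec:genARG}, and the expansion of $\dsep{l}$ from the preceding corollary. The only genuinely new content is verifying that $\DD(\sigma_\ast;\Lambda_U)$ admits an efficient classical sampler, so that is where I would concentrate.

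First I would dispatch the total-variation bound. By \cref{thm:main1} the optimal separable approximant satisfies $\dtr(\sigma_\ast,\targ{l})=\dsep{l}$. The two distributions in the statement are produced by applying the \emph{same} linear-optical channel $\Lambda_U$ to $\targ{l}$ and to $\sigma_\ast$ and then measuring in the particle-number basis; this ``evolve-then-measure'' procedure is a single POVM on the input state, so \cref{eq:UPPboundTV} applied with that POVM yields directly $\left\lVert \DD(\targ{l};\Lambda_U)-\DD(\sigma_\ast;\Lambda_U)\right\rVert \leq \dtr(\sigma_\ast,\targ{l})=\dsep{l}$, exactly as recorded in \cref{sec:genARG}.

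The substantive step is the sampler. I would first note that \cref{eq:optstateL} writes $\sigma_\ast$ as a continuous convex combination $\int d\mu(\varphi)\,\ketbra{\phi_\varphi}{\phi_\varphi}^{\otimes l}$ of symmetric product states, with single-particle states $\ket{\phi_\varphi}\eqdef V_{\varphi_1,\ldots,\varphi_n}\ket{\phi_0}=\frac{1}{\sqrt{n}}\sum_{i=1}^n e^{-\ii\varphi_i}\ket{i}$ and $\mu$ the uniform measure on $[0,2\pi]^n$ (the overall phase sign is immaterial under uniform averaging). This is precisely property (i) of \cref{sec:genARG} specialized to a definite particle number $l$, with the discrete label $\alpha$ replaced by the continuous parameter $\varphi$; the mixing measure $\mu$ is sampled trivially by drawing $n$ independent uniform phases. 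I would then run the three-step recipe of \cref{sec:genARG}: draw $\varphi\sim\mu$; use that $\Lambda_U$ acts in first quantization as $U^{\otimes l}$ [cf.\ \cref{eq:firstQUANTtran}], so the evolved state factorizes as $\bigl(U\ket{\phi_\varphi}\bigr)^{\otimes l}$ and each of the $l$ single-particle states can be propagated independently; and finally sample each particle's output mode independently from the single-particle Born distribution $q_j=\lvert\langle j|U|\phi_\varphi\rangle\rvert^2$, returning the tally of occupied modes as the occupation vector $\mathbf{p}$. Every step touches only $l=\O(n)$ independent $m$-level systems, giving a runtime polynomial in $n$ and $m$.

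Finally, the asymptotic claim is read off from the preceding corollary, where $\dsep{l}=\frac{l^2}{2n}+\o\!\left(\frac{l^3}{n^2}\right)$; dividing by $\frac{l^2}{2n}$ gives $1+\o(l/n)$, which tends to $1$ as soon as $l/n\to0$, and in particular when $l=\o(n^{1/2})$, so $\dsep{l}\approx\frac{l^2}{2n}$. I do not anticipate a deep obstacle anywhere: the corollary is essentially a repackaging of earlier results. The one point demanding care --- the closest thing to a subtlety --- is the justification that a particle-number measurement on the symmetric product state $\bigl(U\ket{\phi_\varphi}\bigr)^{\otimes l}$ reproduces the multinomial statistics of independent per-particle sampling. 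I would make this precise exactly as in the ``effectively distinguishable'' discussion around \cref{eq:permanentclassical}: expanding the overlap of $\ket{\mathbf{p}}$ with $\bigl(U\ket{\phi_\varphi}\bigr)^{\otimes l}$ collapses the relevant permanent to a product $\prod_j q_j^{p_j}$ times a combinatorial factor, which is exactly the probability of the tally $\mathbf{p}$ under $l$ independent draws from $\{q_j\}$.
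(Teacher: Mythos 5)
Your proposal is correct and follows essentially the same route as the paper: invoke \cref{thm:main1} together with the trace-distance/POVM bound of \cref{eq:UPPboundTV} for the total-variation estimate, obtain the efficient sampler by drawing the phases $\varphi_i$ uniformly and propagating $l$ independent $m$-level systems through $U$, and read off the asymptotics from the preceding corollary. Your closing verification that the particle-number statistics of $\bigl(U\ket{\phi_\varphi}\bigr)^{\otimes l}$ are multinomial is precisely the ``effectively distinguishable'' argument the paper relies on around \cref{eq:permanentclassical}, so no gap remains.
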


To our knowledge, the previous best known bound of this type \cite{Aaronson2016} gave an efficient classical simulation when all but $l= \mathrm{O}(\log{n})$ of the photons were lost, since in that regime, the transition probabilities are associated with permanents of log-sized matrices, which can be computed classically in polynomial time. Although this is stated in \cite{Aaronson2016} without proof, it follows straightforward from a recent simulation boson sampling algorithm \cite{Clifford2017}.

As discussed previously, the model where a specific number of particles is lost is of limited practical relevance, since in an experiment this- number is typically a random variable. However, as we show next, insights gained from the analysis of this simplified model are useful to derive classical simulations for the more realistic ones.

\subsection{Uniform losses} \label{sec:unif}

Consider now a lossy linear-optical network under the assumption of uniform losses, i.e.\ every particle in every input can independently be lost with probability $1-\tran$. As discussed in \cref{sec:bslosses}, the action of such a loss channel onto the initial $n$-particle Fock state $\ket{\true}$ leads to the state
\begin{equation}\label{eq:FOCKlost}
\targ{\tran}=\sum_{l=0}^{n} \tran^l (1-\tran)^{n-l}\binom{n}{l} \targ{l}\ .
\end{equation}
To obtain an approximate classical simulation of the resulting linear-optical process, we choose to approximate $\targ{\eta}$ by states from $\Sep\L \SYM{n} \R$ i.e. convex combinations of symmetric separable states with \emph{different numbers of particles}, 
\begin{equation}\label{eq:diffNsep}
\tilde{\sigma} = \sum_{l=0}^\infty p_l  \sigma^{(l)}\  ,
\end{equation}
where $\lbrace p_l \rbrace$ is a probability distribution and $\sigma^{(l)}$ is a symmetric separable state supported on $l$ particles and $n$ modes. As an \emph{ansatz} for a good approximation to $\targ{\eta}$ we take the mixture of optimal $l$-particle separable states with appropriate binomial weights,
\begin{equation}\label{eq:guessSTATE} 
\sigma_\tran \eqdef \sum_{l=0}^n \eta^l (1-\eta)^{n-l} {\textstyle\binom{n}{l}} \sigma^{(l)}_\ast ,
\end{equation}
where $\sigma^{(l)}_\ast$ are given in Eq. \eqref{eq:optstateL}. The resulting trace distance between $\sigma_\tran$ and $\targ{\tran}$ is
\begin{equation}\label{eq:distETAsep}
\dsepB{\tran} \eqdef \dtr\left(\sigma_\tran,\targ{\tran}\right) = \sum_{l=0}^{n} \tran^l (1-\tran)^{n-l}\binom{n}{l} \dsep{l}\ ,
\end{equation}
where $\dsep{l}$ is given in Eq. \eqref{eq:distanceValue}. The following auxiliary result gives bounds on $\dsepB{\tran}$.
\begin{lemma}\label{lem:auxBOUNDS}
Let $ \dsepB{\tran}$ be the trace distance between states $\sigma_\tran$ and $\targ{\tran}$ defined in \cref{eq:FOCKlost} and \cref{eq:diffNsep} respectively. We have the following inequalities 
\begin{align}
\dsepB{\tran}  \leq & \frac{\tran^2 n}{2} + \frac{\tran(1-\tran)}{2}\ , \label{eq:UPPERbound1} \\  
\dsepB{\tran}   \geq & \dsep{\lceil \tran n (1-\delta) \rceil} \left[ 1 - \exp\L-\delta^2 n\tran/2\R  \right].   \label{eq:LOWERbound1} 
\end{align}
where $\delta\in[0,1]$ and $\lceil x \rceil$ denotes the smallest integer greater than $x$. 
\end{lemma}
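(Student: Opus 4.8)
The plan is to read \cref{eq:distETAsep} as the statement that $\dsepB{\tran}$ is simply the expectation $E[\dsep{L}]$ of the single-sector distance $\dsep{L}$ over a binomially distributed number of surviving photons $L\sim\mathrm{Bin}(n,\tran)$; this decomposition is legitimate because the states $\targ{l}$ and $\sigma^{(l)}_\ast$ live in mutually orthogonal particle-number sectors, so the difference $\sigma_\tran-\targ{\tran}$ is block diagonal and its trace norm is the binomially weighted sum of the block trace norms $\dsep{l}$. Both inequalities then follow by combining elementary monotonicity/convexity estimates for $\dsep{l}$ with the first two moments of $L$ and the Chernoff bounds of \cref{eq:Chernoff}. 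As a preliminary step I would rewrite \cref{eq:distanceValue} in product form, $\dsep{l}=1-\prod_{j=0}^{l-1}(1-j/n)$, since this makes all the needed estimates transparent.

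For the upper bound I would first establish the \emph{non-asymptotic} estimate $\dsep{l}\le l^2/(2n)$. Using the product form together with the elementary inequality $1-\prod_j(1-x_j)\le\sum_j x_j$, valid for $x_j\in[0,1]$ (here $x_j=j/n$), one obtains $\dsep{l}\le\sum_{j=0}^{l-1} j/n = l(l-1)/(2n)\le l^2/(2n)$. Substituting this into $\dsepB{\tran}=E[\dsep{L}]$ gives $\dsepB{\tran}\le E[L^2]/(2n)$, and inserting the binomial second moment $E[L^2]=n\tran(1-\tran)+n^2\tran^2$ yields $\dsepB{\tran}\le \tran^2 n/2+\tran(1-\tran)/2$, which is exactly \cref{eq:UPPERbound1}.

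For the lower bound I would exploit that $\dsep{l}$ is nonnegative and nondecreasing in $l$: from the product form, $\dsep{l+1}-\dsep{l}=(l/n)\prod_{j=0}^{l-1}(1-j/n)\ge 0$ for $l\le n$. Writing $l_0\eqdef\lceil\tran n(1-\delta)\rceil$ and discarding the nonnegative terms with $l<l_0$, monotonicity gives $\dsepB{\tran}\ge\dsep{l_0}\,\mathrm{Pr}(L\ge l_0)$. It then remains to lower bound $\mathrm{Pr}(L\ge l_0)$. Since $L$ is integer valued, the inclusion $\{L>(1-\delta)\tran n\}\subseteq\{L\ge l_0\}$ holds, so the first Chernoff bound of \cref{eq:Chernoff}, namely $\mathrm{Pr}(L\le(1-\delta)\tran n)\le e^{-\delta^2\tran n/2}$, gives $\mathrm{Pr}(L\ge l_0)\ge 1-e^{-\delta^2\tran n/2}$, and \cref{eq:LOWERbound1} follows.

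I expect the genuinely delicate points to be bookkeeping rather than conceptual. The corollary in \cref{sec:resFIXED} only records the \emph{asymptotic} relation $\dsep{l}\approx l^2/(2n)$, so the main care is needed in upgrading this to the clean inequality $\dsep{l}\le l^2/(2n)$ valid for \emph{every} $l$, which is precisely what makes the moment computation rigorous. On the lower-bound side the subtlety is the interplay between the ceiling in $l_0$ and the real-valued threshold $(1-\delta)\tran n$ of the Chernoff bound; I would resolve it through the inclusion $\{L>(1-\delta)\tran n\}\subseteq\{L\ge\lceil(1-\delta)\tran n\rceil\}$, which is valid for arbitrary real thresholds exactly because $L$ takes integer values. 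Everything else reduces to the two moments of the binomial distribution and the monotonicity of $\dsep{l}$.
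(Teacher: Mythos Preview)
Your proposal is correct and follows essentially the same route as the paper: both recognize $\dsepB{\tran}=E[\dsep{L}]$ for $L\sim\mathrm{Bin}(n,\tran)$, prove the upper bound via the pointwise estimate $\dsep{l}\le l^2/(2n)$ combined with the binomial second moment, and prove the lower bound by truncating at $l_0=\lceil\tran n(1-\delta)\rceil$, invoking monotonicity of $\dsep{l}$, and applying the Chernoff bound. The only cosmetic difference is that the paper establishes $\dsep{l}\le l^2/(2n)$ by induction on $l$, whereas you obtain it directly from the product form via $1-\prod_j(1-x_j)\le\sum_j x_j$; your handling of the ceiling versus the real threshold in the Chernoff step is also slightly more explicit than the paper's.
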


The proof of the above lemma is given in \cref{app:technicRES}. Combining this result with the results of \cref{sec:resFIXED} and the arguments of \cref{sec:genARG} we conclude that it is possible to approximately simulate any boson sampling protocol that uses the lossy state $\targ{\tran}$ as input, provided that $\langle l \rangle= \tran n =\o(n^{1/2})$, where $\langle l \rangle$ is the average number of photons that are left in the network.  On the other hand, if $\langle l \rangle =\omega(\sqrt{n})$, then $\dsepB{\tran}$ goes asymptotically to $1$.  

\begin{corollary}\label{cor:SIMbeamsplitter}
Let $U$ be a linear-optical unitary matrix. Let $\DD(\targ{\tran};\Lambda_U)$ be the probability distribution obtained by applying $\Lambda_U$ to $\targ{\tran}$ [cf.\ \cref{eq:FOCKlost}] and measuring the resulting state in the particle-number basis. There exists a probability distribution $\DD(\sigma_\tran;\Lambda_U)$ that can be sampled efficiently on a classical computer such that $\left \lVert\DD(\targ{\tran};\Lambda_U)-\DD(\sigma_\tran;\Lambda_U) \right \lVert = \dsepB{\eta}$, where $\dsepB{\eta}$ is given in \cref{eq:distETAsep} and satisfies $\dsepB{\eta} \leq  \frac{\tran^2 n}{2} + \frac{\tran(1-\tran)}{2}$.
\end{corollary}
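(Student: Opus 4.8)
The plan is to obtain \cref{cor:SIMbeamsplitter} by wrapping together three facts that are already in hand: the explicit ansatz $\sigma_\tran$ of \cref{eq:guessSTATE}, the generic weak-simulation scheme of \cref{sec:genARG}, and the upper bound \cref{eq:UPPERbound1} of \cref{lem:auxBOUNDS}. No genuinely new estimate is needed; the content of the proof is to check that $\sigma_\tran$ satisfies conditions (i) and (ii) of \cref{sec:genARG} and then to read off the simulation error.

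First I would verify condition (i), namely that $\sigma_\tran\in\Sep(\SYM{m})$ with an efficiently samplable decomposition. By \cref{eq:guessSTATE} the state is the binomial mixture $\sigma_\tran=\sum_{l=0}^n \tran^l(1-\tran)^{n-l}\binom{n}{l}\sigma^{(l)}_\ast$, and by \cref{eq:optstateL} each $\sigma^{(l)}_\ast$ is an average over the phases $\varphi_1,\ldots,\varphi_n$ of the symmetric product states $(V_{\varphi_1,\ldots,\varphi_n}\ketbra{\phi_0}{\phi_0}V_{\varphi_1,\ldots,\varphi_n}\dg)^{\otimes l}$. Hence $\sigma_\tran$ is a convex combination of symmetric product states with varying particle number, so it lies in $\Sep(\SYM{m})$, and its decomposition is sampled efficiently: one draws $l$ from the binomial distribution with parameters $(n,\tran)$, then draws the $n$ phases $\varphi_i$ independently and uniformly from $[0,2\pi]$, which fixes both the copy number $l$ and the single-particle pure state $\ket{\phi}=V_{\varphi_1,\ldots,\varphi_n}\ket{\phi_0}=\tfrac{1}{\sqrt{n}}\sum_{i=1}^n e^{\ii\varphi_i}\ket{i}$ in $\mathrm{poly}(n)$ time.

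Next I would invoke the argument of \cref{sec:genARG} unchanged. Given a sample $(l,\ket{\phi})$, the channel $\Lambda_U$ acts on $\ket{\phi}^{\otimes l}$ as $U^{\otimes l}$ [cf.\ \cref{eq:firstQUANTtran}] and the final particle-number measurement factorizes into $l$ independent single-particle measurements of $U\ket{\phi}$ in the mode basis; this is just the simulation of $l\le n$ parallel $m$-level systems and is manifestly efficient, establishing that $\DD(\sigma_\tran;\Lambda_U)$ is classically samplable in polynomial time. For the error, the reasoning of \cref{sec:genARG} (i.e.\ \cref{eq:UPPboundTV} together with contractivity of the trace distance under the channel $\Lambda_U$) gives $\left\lVert\DD(\targ{\tran};\Lambda_U)-\DD(\sigma_\tran;\Lambda_U)\right\rVert\le \dtr(\sigma_\tran,\targ{\tran})=\dsepB{\tran}$, where the last equality is the definition in \cref{eq:distETAsep}. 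Substituting \cref{eq:UPPERbound1} then yields $\dsepB{\tran}\le \tfrac{\tran^2 n}{2}+\tfrac{\tran(1-\tran)}{2}$, which is the claimed estimate.

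There is no obstacle specific to this corollary: the real work sits in \cref{thm:main1} (identifying $\sigma^{(l)}_\ast$ and its distance $\dsep{l}$) and in \cref{lem:auxBOUNDS} (bounding the binomial average of the $\dsep{l}$), both of which may be assumed here. The only point I would be careful to state explicitly is why the trace distance of the two binomial mixtures equals the binomial average of the sectorwise distances $\dsep{l}$: states supported on distinct particle-number sectors of $\SYM{m}$ are mutually orthogonal, so both $\sigma_\tran$ and $\targ{\tran}$ are block-diagonal with matching blockweights and the trace distance splits across blocks. This is exactly the identity already recorded in \cref{eq:distETAsep}, so in practice it can simply be cited rather than rederived.
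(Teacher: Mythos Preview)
Your proposal is correct and matches the paper's approach exactly: the paper does not give a standalone proof of this corollary but simply states that it follows by combining \cref{lem:auxBOUNDS} with the results of \cref{sec:resFIXED} and the general simulation scheme of \cref{sec:genARG}, which is precisely what you spell out in detail. Note that your argument yields the inequality $\left\lVert\DD(\targ{\tran};\Lambda_U)-\DD(\sigma_\tran;\Lambda_U)\right\rVert\le\dsepB{\tran}$ rather than the equality printed in the statement; this is the correct conclusion from \cref{eq:UPPboundTV}, and the ``$=$'' in the corollary should be read as an upper bound.
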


Our arguments give an approximate classical weak simulation to $\DD(\targ{\tran};\Lambda_U)$ only when $\tran=\o(n^{-1/2})$, which at first sight does not seem physically relevant. Usually, lossy devices (such as sources, beamsplitters or detectors) have loss probabilities that are constant, rather than decreasing with $n$. However, as we discuss in the next section, standard practical realizations of linear-optical networks often lead to dependence of the \emph{effective} overall transmissivity of the network, $\tran_{\mathrm{eff}}$, on the size of the system. Since the usual boson sampling paradigm requires networks that increase with the number of photons, this can easily ensure that $\tran_{\mathrm{eff}}$ decreases at least as $\tran_{\mathrm{eff}}=\o(n^{-1/2})$.

\subsection{Non-uniform losses}\label{sec:nonUNIF}

The effect of losses on linear-optical protocols has been extensively studied in the literature \cite{Barnett1998,Aaronson2016,Rahimi-Keshari2016,Oszmaniec2016,Neville2017}. A common assumption in these works  is that losses are uniform, as we considered in the previous section. This model can be justified, for example, by assuming that every path that a photon can take through a network is equally lossy. For some geometries of the network this can indeed be the case, such as that shown in \cref{fig:balanced_unbalanced}(a) \cite{Clements2016}. Moreover, this is a natural assumption for losses that model e.g.\ $m$ identical imperfect particle detectors \cite{Barnett1998,Rahimi-Keshari2016}. 

However, if the network has an unbalanced geometry, different paths through it accumulate different amounts losses, as in the case of \cref{fig:balanced_unbalanced}(b) \cite{Reck1994}. The analysis of such networks is more complex, as losses affecting only certain modes in general do not commute with linear-optical elements connecting them to different modes (see \cref{fig:lossesstuck}). Indeed, the main point of \cite{Clements2016} is to propose a balanced decomposition of a universal linear-optical network with the goal of making it more loss-resistant. The authors argue that, in a balanced network such as that of \cref{fig:balanced_unbalanced}(a), losses ``only'' act as overall losses, i.e.\ uniformly, whereas in the more commonly used decomposition of \cref{fig:balanced_unbalanced}(b) losses also behave as non-unitary noise that is harder to mitigate and characterize.

\begin{figure}[t]
    \centering
    \includegraphics[width=0.4\textwidth]{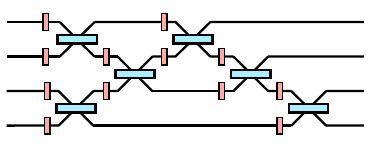}
    \caption{A lossy linear-optical network that is not balanced, in the sense that different paths within the network suffer different amounts of losses. In these cases, it is not always obvious how many layers of losses can be ``extracted'' from the circuit.}
    \label{fig:lossesstuck}
\end{figure}

To the best of our knowledge, until now there was no rigorous justification for the phenomenological assumption that losses are uniform, unless the network has a very special geometry such as in \cref{fig:balanced_unbalanced}(a). The following result remedies this by showing that it is always possible to rewrite a lossy network of arbitrary geometry as a round of uniform losses followed by another (possibly still lossy) linear-optical network.

\begin{theorem}\label{thm:PULLoutNOISE}
Let $\N$ be an $m$-mode linear-optical network constructed out of the two-mode elements of \cref{fig:buildingblock}, each inducing the same loss probability $(1-\tran)$ in both of its input modes. Let $s$ be the smallest number of beamsplitters traversed by any path connecting an input mode to an output mode in the network. Let $\Lambda_\N$ be the quantum channel associated with the entire network $\N$. Then we have the following decomposition of $\Lambda_\N$:
\begin{equation}\label{eq:nonBALdec}
\Lambda_{\N} = \Lambda_{\tilde{\N}} \circ \Lambda_{\tranef}\ ,  
\end{equation}
where $\Lambda_{\tranef}$ is a channel describing uniform losses with effective transmissivity $\tranef=\tran^s$, and $\tilde{\N}$ is a linear-optical network obtained by removing $s m$ loss elements from $\N$. The description of $\tilde{\N}$ from $\N$ can be obtained efficiently.
\end{theorem}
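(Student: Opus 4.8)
The plan is to reduce all loss bookkeeping to a combinatorial flow problem on the directed acyclic graph (DAG) underlying $\N$, and then to show that the minimal-path quantity $s$ is exactly what controls how much uniform loss can be commuted back to the inputs. Since a passive linear-optical network is feed-forward, I would model $\N$ as a DAG whose vertices are the beamsplitters and whose edges are the wire segments, with $m$ distinguished input edges and $m$ output edges. Because every beamsplitter induces the \emph{same} loss $(1-\tran)$ on both of its input arms, and because loss elements on a single wire compose multiplicatively [$\Lambda_{\tran}\circ\Lambda_{\tran'}=\Lambda_{\tran\tran'}$, from \cref{eq:BSuniformMODEL}], the entire loss configuration can be recorded by an integer $k_e\ge 0$ on each edge $e$, meaning a factor $\Lambda_{\tran^{k_e}}$ sitting on that wire. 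Initially $k_e=1$ on every edge that terminates at a beamsplitter and $k_e=0$ on the $m$ output edges. The single elementary move available is the $m=2$ commutation of \cref{fig:buildingblock}: uniform loss on both output arms of a beamsplitter can be traded for the same uniform loss on both of its input arms. In the edge picture, at a beamsplitter with in-edges $e_1,e_2$ and out-edges $f_1,f_2$ one sets $c=\min(k_{f_1},k_{f_2})$ and performs $k_{f_i}\mapsto k_{f_i}-c$, $k_{e_j}\mapsto k_{e_j}+c$; each such step is a valid channel identity because only the uniform (minimum) part of the two output losses is commuted while the remainder is left in place.

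The core step is to process the beamsplitters in reverse topological order (outputs first) and, at each one, push the maximal uniform loss $c$ back onto its input edges. I claim this makes every edge $e$ carry exactly $k_e=h(e)$ at the moment its downstream beamsplitter has been processed, where $h(e)$ is the least number of beamsplitters on any path from $e$ to an output, so that $h=0$ on output edges and $h(e)=1+\min(h(f_1),h(f_2))$ for an edge feeding a beamsplitter with out-edges $f_1,f_2$. This follows by induction along the reverse topological order: when a beamsplitter is processed, its out-edges already hold their final values $h(f_1),h(f_2)$ (an out-edge is only modified when its own downstream beamsplitter is processed, which comes earlier, and when this beamsplitter is processed), so each in-edge receives $1+\min(h(f_1),h(f_2))=h(e)$. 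An input edge $e_i$ has no upstream beamsplitter, so its value is never decremented and stays at $h(e_i)=d_i$, the minimal number of beamsplitters between input $i$ and any output. Since $d_i\ge \min_j d_j = s$ by definition, after this rearrangement every one of the $m$ input edges carries at least $s$ loss elements.

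Finally I would factor $\Lambda_{\tranef}=\Lambda_{\tran}^{\circ s}$ off the front: from each input edge remove $s$ of the accumulated loss elements (possible since each carries $\ge s$), which together form one uniform $s$-fold loss layer with $\tranef=\tran^s$ acting on all $m$ input modes, and collect everything remaining into $\tilde{\N}$. This relocates and removes exactly $sm$ loss elements and yields the decomposition \cref{eq:nonBALdec}. Efficiency is immediate, since computing all $h(e)$ and the rearranged loss counts is a single dynamic program over the DAG, linear in the number of beamsplitters.

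I expect the main obstacle to be the core step: proving that the greedy backward commutation simultaneously delivers at least $s$ loss elements to \emph{every} input edge. The subtlety is that each commutation transports only the uniform (minimum) of the two output losses, so one must verify that this bottleneck, propagated recursively from the outputs, reproduces exactly the shortest-path function $h$ and in particular does not starve any input edge below $s$. The identification $k_e=h(e)$ via the reverse-topological induction is precisely what makes this work, and it is the place where the hypothesis that each beamsplitter loses the same amount on both arms is essential, since it is what guarantees a genuinely \emph{uniform} two-mode loss is available to commute at every vertex.
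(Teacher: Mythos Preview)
Your argument is correct, and it reaches the same conclusion by a genuinely different route than the paper. The paper works \emph{forward}: it labels each beamsplitter by its shortest distance to any \emph{input}, then iteratively peels off one uniform loss layer at a time over $s$ rounds, at each round commuting losses through the frontier set $G_{j-1}$ of beamsplitters whose label has just dropped to zero, and invoking the previous round's configuration as an inductive hypothesis. You instead work \emph{backward}: a single reverse-topological sweep that greedily pushes the uniform part of the output losses through each beamsplitter, together with the observation that the resulting edge counts satisfy the shortest-path recursion $h(e)=1+\min\bigl(h(f_1),h(f_2)\bigr)$ to the \emph{outputs}. Your identification $k_e=h(e)$ is exactly the right invariant, and the induction you sketch is sound; it is essential (and you use it) that in reverse topological order every out-edge of a beamsplitter has already received its increment, but not yet its decrement, at the moment that beamsplitter is processed.

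What each approach buys: the paper's layer-by-layer extraction mirrors the physical picture of ``pulling out'' successive loss rounds and makes the role of the configurations in their \cref{fig:commutingcases} explicit, at the cost of $s$ passes and a somewhat delicate inductive bookkeeping of which losses have or have not been touched. Your DAG/dynamic-programming formulation is cleaner and strictly one pass; it also makes transparent why $s$ is the exact obstruction (it is the minimum over inputs of a shortest-path functional that the greedy commutation computes verbatim), and it shows immediately that each input $i$ in fact accumulates $d_i\ge s$ loss elements, slightly more information than the theorem asserts. Both proofs rely in the same essential place on the equal-loss-per-arm hypothesis, namely to guarantee that a nonzero uniform part is available to commute at every vertex.
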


\begin{figure} [b]
    \centering
    \includegraphics[width=0.6\textwidth]{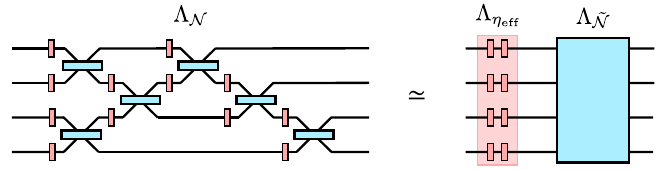}
    \caption{If $s$ is the smallest number of beam splitters that any path in the network traverses (in this example, $s=2$), \cref{thm:PULLoutNOISE} allows us to extract $s$ layers of losses from the circuit and gives us an efficient description of the remaining network $\tilde{\N}	$.}
    \label{fig:mainthm}
\end{figure}

A schematic representation of this Theorem is shown in \cref{fig:mainthm}. We present the complete proof of \cref{thm:PULLoutNOISE} (including a prescription to obtain $\tilde{\N}$ from $\N$) in \cref{sec:mainProofs} and focus now on its consequences.

A version of \cref{thm:PULLoutNOISE} is well-known if one can assume that the network $\N$ can be decomposed onto layers of commuting linear-optical elements, each layer covering all  modes. Since uniform $m$-mode losses commute with arbitrary $m$-mode linear-optical networks [cf.\ \cref{sec:bslosses}], whenever a network has a layer of uniform losses it can be  commuted out of the network as in \cref{fig:commutinglayer}. Thus, if a network has $s$ layers of this type they can be extracted layer by layer, leading to a decomposition as in \cref{thm:PULLoutNOISE} in a straightforward manner. \cref{thm:PULLoutNOISE} strengthens this statement by showing that it holds in more general geometries, as long as every path between an input and an output passes through $s$ loss elements. 

\begin{figure}
    \centering
    \includegraphics[width=0.5\textwidth]{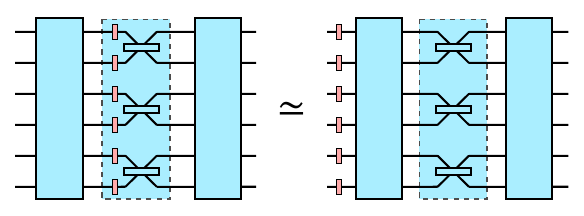}
    \caption{If a linear-optical network has a layer of equal loss elements on \emph{all} modes, those losses can be ``pulled out'' of the network, since a round of equal losses on every mode commutes with any linear optics. A linear-optical network where every layer has this property can be modeled exactly by the uniform model of loss.}
    \label{fig:commutinglayer}
\end{figure}

\cref{thm:PULLoutNOISE} already formalizes the claim, stated in \cref{sec:intro}, that losses in current implementations of boson sampling increase exponentially with the depth of the circuit (if one identifies the depth of the circuit with the definition $s$ in \cref{thm:PULLoutNOISE}), which is a serious scalability issue. Let us now consider the consequences of   this result for classical simulability.

First note that, from the efficient description of $\tilde{\N}$, we can efficiently sample from the probability distribution $\DD\L \kb{\phi}{\phi}^{\otimes l}, \Lambda_{\tilde{\N}}   \R$ (see \cref{sec:BosSamProp} and \cref{sec:genARG}). Furthermore, from \cref{eq:nonBALdec}, we have that $\DD\L \true; \Lambda_\N \R =\DD\L \Lambda_{\tranef} (\true); \Lambda_{\tilde{\N}} \R$. From the strategy outlined in \cref{sec:genARG}, if $\Lambda_{\eta_{\mathrm{eff}}}\L \true \R$ is close in trace distance to $\Sep\L \SYM{n} \R$  it is possible to classically sample from a distribution approximating the desired one $\DD\L \true; \Lambda_\N \R$. On the other hand, from \cref{lem:auxBOUNDS} we know that there exist a state $\sigma_{\tran_{\mathrm{eff}}} \in \Sep\L \SYM{n} \R$ such that $\dtr\L \Lambda_{\eta_{\mathrm{eff}}}\L \true \R, \sigma_{\tranef} \R  \leq \dsepB{\tranef}$ , where $\dsepB{\tranef}$ is defined in \cref{eq:distETAsep}. Putting this all together leads to the following result.

\begin{corollary}\label{cor:SIMarbit}
Let $\N$ be the $m$-mode linear-optical network of \cref{thm:PULLoutNOISE} and $\Lambda_\N$ the corresponding linear-optical channel. Let $s$ be the smallest number of beamsplitters traversed by any path between an input and an output mode. Let $\DD(\true;\Lambda_\N)$ be the probability distribution obtained by applying $\Lambda_\N$ to the standard boson sampling input state $\true$ and measuring the resulting state in the particle-number basis. There exists a probability distribution $\DD(\sigma_{\tranef};\Lambda_{\tilde{\N}})$ that can be sampled efficiently on a classical computer and such that $\left \lVert\DD(\true;\Lambda_\N)-\DD(\sigma_{\tranef};\Lambda_{\tilde{\N}}) \right \lVert = \dsepB{\tranef}$, where $\dsepB{\eta}$ is given in \cref{eq:distETAsep} and satisfies $\dsepB{\eta} \leq  \frac{\tran^2 n}{2} + \frac{\tran(1-\tran)}{2}$, and $\tranef= \eta^s$.
\end{corollary}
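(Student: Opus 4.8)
The plan is to assemble this corollary from three ingredients already in hand: the structural decomposition of \cref{thm:PULLoutNOISE}, the separable approximation of the uniformly-lost Fock state from \cref{sec:unif}, and the generic weak-simulation recipe of \cref{sec:genARG}; the corollary adds little beyond the bookkeeping that glues these together. First I would apply \cref{thm:PULLoutNOISE} to write $\Lambda_\N = \Lambda_{\tilde{\N}} \circ \Lambda_{\tranef}$ with $\tranef = \tran^s$. Since the output distribution is produced by sending $\true$ through $\Lambda_\N$ and measuring in the particle-number basis, this factorization gives $\DD(\true;\Lambda_\N) = \DD(\Lambda_{\tranef}(\true);\Lambda_{\tilde{\N}})$. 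The key identification is that $\Lambda_{\tranef}$ is precisely the uniform-loss channel of \cref{eq:BSuniformMODEL}, so $\Lambda_{\tranef}(\true)$ equals the uniformly-lost Fock state $\targ{\tranef}$ of \cref{eq:FOCKlost} (with $\tran$ replaced by $\tranef$); the task thus reduces to simulating $\targ{\tranef}$ propagating through the residual network $\tilde{\N}$.

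Next I would substitute the separable ansatz $\sigma_{\tranef}$ of \cref{eq:guessSTATE}, which by its definition in \cref{eq:distETAsep} satisfies $\dtr(\sigma_{\tranef},\targ{\tranef}) = \dsepB{\tranef}$, bounded through \cref{lem:auxBOUNDS} by $\dsepB{\tranef} \le \tranef^2 n/2 + \tranef(1-\tranef)/2$. Replacing the input $\targ{\tranef}$ by $\sigma_{\tranef}$ perturbs the output statistics by at most this amount: since the map ``apply $\Lambda_{\tilde{\N}}$, then measure particle number'' is a fixed POVM on the input state, \cref{eq:UPPboundTV} yields
\[
\lVert \DD(\true;\Lambda_\N) - \DD(\sigma_{\tranef};\Lambda_{\tilde{\N}}) \rVert
= \lVert \DD(\targ{\tranef};\Lambda_{\tilde{\N}}) - \DD(\sigma_{\tranef};\Lambda_{\tilde{\N}}) \rVert
\le \dtr(\targ{\tranef},\sigma_{\tranef})
= \dsepB{\tranef}.
\]
The ``$=\dsepB{\tranef}$'' appearing in the statement should be read as this upper bound, exactly as in \cref{sec:genARG}.

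It then remains to verify that $\DD(\sigma_{\tranef};\Lambda_{\tilde{\N}})$ is efficiently weakly samplable, which I would do along the three-step route of \cref{sec:genARG}. The state $\sigma_{\tranef}$ is a convex mixture of symmetric product states $\kb{\phi_{l,\alpha}}{\phi_{l,\alpha}}^{\otimes l}$ whose joint distribution over the particle number $l$ (binomial in $\tranef$) and the phases $\{\varphi_i\}$ (uniform on $[0,2\pi]$) is trivially sampled, and \cref{thm:PULLoutNOISE} supplies an efficient classical description of $\tilde{\N}$. The one point needing care is that $\tilde{\N}$ is itself still lossy, so I must confirm that the ``efficiently describable channel'' of \cref{sec:genARG} covers a lossy $\Lambda_{\tilde{\N}}$ and not merely unitaries $\Lambda_U$. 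This holds because in first quantization a lossy passive network acts on each particle as an independent single-particle completely positive map --- equivalently, a single-particle unitary on the doubled ($2m$-mode) space followed by discarding the loss modes, as flagged in the footnote of \cref{sec:genARG}. Hence the $l$ photons in a product state never interfere and can be propagated and measured independently, and each sample costs $\mathrm{poly}(n,m)$ time.

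I expect the main obstacle to be organizational rather than mathematical, since the substantive content lives in \cref{thm:PULLoutNOISE} and \cref{lem:auxBOUNDS}, both assumed here. Within the corollary itself the step I would watch most closely is the efficiency claim for the residual lossy network $\tilde{\N}$ --- making sure that discarding the loss modes does not covertly reintroduce interference between the independently simulated photons --- together with keeping the transmissivities straight, as the per-element $\tran$ (which defines $\N$) and the effective $\tranef = \tran^s$ (which governs the error) play distinct roles.
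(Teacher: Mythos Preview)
Your proposal is correct and follows essentially the same route as the paper: apply \cref{thm:PULLoutNOISE} to factor $\Lambda_\N=\Lambda_{\tilde\N}\circ\Lambda_{\tranef}$, identify $\Lambda_{\tranef}(\true)=\targ{\tranef}$, replace it by the separable ansatz $\sigma_{\tranef}$ with error bounded by \cref{lem:auxBOUNDS}, and invoke the weak-simulation recipe of \cref{sec:genARG}. Your extra care about the residual network $\tilde\N$ still being lossy (handled via the mode-doubling footnote in \cref{sec:genARG}) and about reading the ``$=\dsepB{\tranef}$'' as the trace-distance upper bound on total variation is appropriate and matches the paper's intent.
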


It is clear from \cref{cor:SIMarbit} that our simulation gives a good approximation whenever there are less than O$(\sqrt{n})$ photons left on average. In terms of $s$, the regime of $\alpha \sqrt{n}$ photons left corresponds to $s = - \log n /(2 \log \eta) + \log \alpha / \log \eta$. Thus, our results show that boson sampling becomes classically simulable for networks of depth greater $C \log n$, where $C$ is some suitably large constant that depends on $\eta$. We should note that a similar scaling could also be drawn from the results described at the end of \cref{sec:resFIXED}, since the regime where only $\alpha \log n$ photons are left on average corresponds to depth $s = - \log n / \log \eta + \log \log n / \log \eta + \log \alpha /\log \eta$.

It is curious to note that our \cref{cor:SIMarbit} generates a tension with the conclusions drawn by the authors of \cite{Clements2016}. There, they argue that balanced networks are better, since unbalanced losses act as non-unitary noise. But here we conclude the opposite: if a network is too balanced, more losses can be uniformly extracted from the network, improving the quality of our simulation of \cref{cor:SIMarbit}. 

We conclude this section by pointing out a few ways that the above results can be strengthened. For the first note that, if photons are only inserted in particular input modes, we can replace the network by another effective network where unused inputs are ignored. For example, consider the standard decomposition of \cref{fig:balanced_unbalanced}(b). One could think that \cref{thm:PULLoutNOISE} would not be relevant there since, in principle, $s=1$. However, if it is known that photons will only enter the first $n$ modes we can restrict definition of $s$ in \cref{thm:PULLoutNOISE} to only take into account paths starting from those inputs, effectively increasing $s$. This can also be relevant if one does not have control over the input, such as in scattershot boson sampling \cite{Scottblog,Bentivegna2015}.

Another way \cref{thm:PULLoutNOISE} can be strengthened is that it can be applied even if losses are not all equal over all beam splitters. As long as every beamsplitter has some nonzero loss probability in each input mode (which may even differ between the two inputs of the same beamsplitter), we can repeat the procedure of \cref{thm:PULLoutNOISE} by pulling out $s$ layers of uniform loss, each with loss probability $1-\eta^\ast$, where $\eta^\ast$ is the highest among all value of $\eta$ for all beamsplitters. Finally, we conjecture that \cref{cor:SIMarbit} holds even if a network is composed not only of two-mode elements, but linear-optical elements of various sizes, and we leave this as a question for future work.

\section{Relation to other works and relevance for boson sampling} \label{sec:bosonsampling}

\subsection{Complexity-theoretic considerations}

All main simulation theorems contained in \crefrange{sec:resFIXED}{sec:nonUNIF} have a common feature, in that they allow us to approximate boson sampling with the desired lossy state $\rho$, by replacing it by another state $\sigma$, up to some total variation distance $\epsilon$. The precision of the simulation, $\epsilon$, is not a controllable parameter, but rather a function of $n$ and possibly other physical parameters such as the transmission probability $\eta$. In suitable regimes (for instance, if $l=\o(n^{1/2})$ in \cref{cor:fixed}), $\epsilon$ decreases as the problem size $n$ increases. Let us investigate the complexity-theoretic consequences of these results.

In the original paper on boson sampling, \cite{Aaronson2013a}, the authors define the following computational problem, which they call Gaussian Permanent Estimation:
\begin{problem}[$|\textrm{GPE}|_{\pm}^2$, \cite{Aaronson2013a}]\label{prob:GPE}
Given as input a matrix $X$ of i.i.d.\ elements taken from the standard (complex) normal distribution, together with bounds $\epsilon', \delta > 0$, estimate $|\mathrm{Per} X|^2$ to within additive error $\pm \epsilon' \cdot n!$ with probability at least $1 - \delta$ over choices of $X$ in poly$(n, 1/\epsilon', 1/\delta)$ time.
\end{problem}
Their main result (Theorem 1.3 of \cite{Aaronson2013a}) then can be stated roughly as follows. Let $U$ be an $m \times m$ Haar-random matrix, where $m = \mathrm{O}(n^2)$, and denote by $\mathcal{D}$ the probability distribution over outcomes generated by a boson sampling computer, i.e.\ with probabilities given by \cref{eq:permanentboson}. Suppose there is a classical algorithm $\mathcal{C}$ that takes as input the unitary matrix $U$ and an error bound $\epsilon > 0$ and outputs a sample from some distribution $\mathcal{D}'$ such that $\left \lVert \mathcal{D}-\mathcal{D}' \right \lVert<\epsilon$, in time poly$(n,1/\epsilon)$. Then it would be possible to solve $|\textrm{GPE}|_{\pm}^2$ in the complexity class $\mathrm{BPP}^{\mathrm{NP}}$ (for definitions and discussions on these complexity classes we direct the reader to \cite{Aaronson2013a} or to the Complexity Zoo \cite{compzoo}).

The authors argue that such a consequence would be very surprising, and so an efficient classical algorithm $\mathcal{C}$ is unlikely, but they also take a step further. By positing two additional complexity conjectures, the Permanent-of-Gaussian conjecture and the Permanent Anti-Concentration conjecture, which together state that estimating $|\textrm{Per}(X)|^2$ for Gaussian $X$ to high precision and with high probability is a $\#$P-hard problem, they show that the existence of an efficient classical algorithm $\mathcal{C}$ would imply a collapse of the polynomial hierarchy. The polynomial hierarchy is a hierarchy of complexity classes which, in the complexity theory community, is widely believed to be infinite, and so this results further strengthens the initial claim that boson sampling cannot be efficiently simulated on a classical computer. 

By inspecting our results, we see that they do not follow the same definition of efficient classical simulation as used in \cite{Aaronson2013a}. There, the classical algorithm should  be able to produce samples from a distribution that is $\epsilon$-close to the ideal one, for \emph{arbitrarily small} $\epsilon$, just by spending poly$(1/\epsilon)$ more computational time. This definition of classical simulation has also recently been discussed in detail in \cite{Pashayan2017}, where it is called an $\epsilon$-simulation.

One might ask whether the authors of \cite{Aaronson2013a} could have used a less stringent definition of simulation, where $\epsilon$ is some \emph{fixed} threshold value rather than a controllable parameter that can be made arbitrarily small. This is used, for example, in \cite{Bremner2016}, where the authors apply a similar reasoning used for boson sampling to the model of commuting quantum circuits, concluding that it would be unlikely for a classical computer to be able to efficiently sample from a distribution that is closer than 1/192 in total variation distance to the ideal one. By adapting the argument of \cite{Aaronson2013a} one can see that, if there existed a classical algorithm that sampled from a distribution $\mathcal{D}'$ such that $\left \lVert \mathcal{D}_U-\mathcal{D}' \right \lVert<\epsilon$ for \emph{fixed} $\epsilon$, it would possible be solve, in $\mathrm{BPP}^{\mathrm{NP}}$, an easier version of $|\textrm{GPE}|_{\pm}^2$ where $\delta$ and $\epsilon'$ are not free parameters, but rather satisfy $\epsilon' \delta = \mathrm{O} (\epsilon)$. That is, there would be a trade-off between the probability that the $\mathrm{BPP}^{\mathrm{NP}}$ algorithm outputs an estimate for $|\mathrm{Per} X|^2$ for Gaussian $X$ and the quality of that estimate. That would be still an unlikely complexity-theoretic outcome, albeit a less surprising than an efficient classical algorithm $\mathcal{C}$ described above.

This discussion does explicitly influence our conclusions, however. Ultimately, our main results are not addressing any complexity conjecture, given that to our knowledge no one has formally conjectured that boson sampling is hard in the limit of large losses or that the probabilities generated by a lossy device are associated with any natural hard problem. At the same time, for any given problem size $n$ our simulations work only up to fixed precision $\epsilon(n)$ as described in the corresponding theorems. Thus our results do not rule out, technically, that lossy boson sampling has interesting computational capabilities, if one defines the corresponding computational \emph{task} in terms of the ``arbitrarily small $\epsilon$'' prescription advocated in \cite{Aaronson2013a,Pashayan2017}. On the other hand our results do place stringent requirements on proposed experimental \emph{demonstrations} of quantum computational supremacy based on lossy boson sampling. The fact that the precision of our simulations improves with the size of the problems means that, in order for a realistic boson sampling machine to outperform (asymptotically) our simulation, the experimental imperfections (most crucially the loss probability per-mode per-component) must decrease sufficiently fast as experiments move to larger problem size. 

\subsection{Relation to other work}

\paragraph*{Connection to the mean field state---}
Consider the closest separable state to $\targ{l}$ from \cref{eq:target}, namely the state $ \sigma_\ast$ in \cref{thm:main1}. This state has appeared previously in the boson sampling literature, where it was called a mean-field state \cite{Tichy2014}. A device that performs a boson sampling experiment using this state as input has been proposed in \cite{Tichy2014} as a semi-classical approximation to a boson sampling device that would be classically simulable, but that would be able to fool methods capable of distinguishing boson sampling distributions from alternative ones \cite{Aaronson2014,Gogolin2013,Carolan2014}. In particular, the mean-field state has the ability to effectively display  more ``bunching'' effects than distinguishable classical particles, which is usually considered a genuinely bosonic signature. It is interesting then that this state has also appeared in our main results as the closest symmetric separable state to the lossy bosonic state, as answer to a question that is seemingly unrelated to the  matter of verification of boson sampling devices.

\paragraph*{Quantum de Finetti theorem---} It is natural to investigate the connection of our results to quantum de Finetti theorem \cite{Doherty2002,Christandl2007,Harrow2013}. In particular, the quantum de Finetti theorem for bosonic systems is concerned with the best  approximation (in trace distance) of the state $\tr_{n-l}\L \Psi \R$ by separable product states $\sigma \in \Sep\L  \symm{l}{m} \R$, where $\Psi$ is an $n$-particle $m$-mode pure bosonic state.  This is exactly the setting that we consider in the case of the fixed-loss model considered in \cref{thm:main1}. Applying directly the bosonic de Finetti \cite{Christandl2007} theorem yields
\begin{equation}\label{eq:DEfinetti}
\min_{\sigma\in\Sep\L \symm{l}{n} \R} \dtr \L \tr_{n-l}\L \Psi \R, \sigma \R \leq \frac{2lm}{n}\ .
\end{equation}
In the context of boson sampling we always have $m\geq n$ and thus the above bound  becomes useless. The much stronger result given in \cref{thm:main1} was possible only because it concerns separable approximation to the particular lossy state $\tr_{n-l}\L \true \R$, where $\true$ is defined in \cref{eq:defINPUTstate}.

\paragraph*{Entanglement of symmetric states---} Our results can be also linked to studies of entanglement of symmetric states, which recently received a lot attention \cite{Ghune2009, Oszmaniec2014,YU2016,Quesada2017,TuraSep2017}. In, particular, due to the difficulty of the general separability problem, many works  consider the problem of deciding whether states diagonal in the Dicke basis are entangled \cite{YU2016,Quesada2017,TuraSep2017}. However, little attention has been given to computing entanglement measures or entanglement indicators for such states. The trace distance  to the set of symmetric separable states \cref{eq:distanceKN} can be used as an indicator of entanglement and as a lower bound the geometric measure of entanglement \cite{Streltsov2010}. Thus, our results can be also understood as  quantitative description for a class of lossy bosonic states. We believe that tools introduced in this work will be useful in quantitative studies of entanglement of symmetric states diagonal in the Dicke basis. 

\section{Proofs of the main results}\label{sec:mainProofs}

In this section we prove the main results of the paper: Theorems \ref{thm:main1} and \ref{thm:PULLoutNOISE}. The proofs given below are essentially self-contained, however along the way we use some auxiliary technical statements whose justification we defer to the Appendix.

\subsection{Best separable approximation with respect to trace distance}\label{sec:proofDIST}

In order to find the best separable approximation to $\targ{l}$ (in trace distance) we prove a few intermediate results. First, in Lemma \ref{lem:stabOPT} we show that the symmetries of the state $\targ{l}$ allow us to limit our attention to separable approximations having a particularly simple ``twirled'' structure, which we describe explicitly in Corollary \ref{cor:doubleAV} and Lemma \ref{lem:twirledSTATES}. Then, in Lemma \ref{lem:optPURE}  we show that the closest symmetric separable state can be chosen  as the twirling of a \emph{pure product state}. Finally, in Lemma \ref{lem:optialisationPROD} we carry out the optimization over pure product seed states which directly yields Theorem \ref{thm:main1}. Proofs of some of the intermediate results are technical and can be left out during the first reading. 

In what follows we denote by $\LO{n}$ the group of $n\times n$ unitary matrices encoding linear-optical transformations on $n$ modes.

\begin{lemma}[Stabilizing subgroups and the structure of the best separable approximation]\label{lem:stabOPT}
 Let  $\sigma_\ast \in \Sep\left(\symm{l}{n}\right)$ be a symmetric $l$-particle separable state supported on $n$ modes satisfying 
\begin{equation}
\dtr\left(\targ{l},\sigma_\ast\right) = \dsep{l}= \min_{\sigma\in\Sep\left(\symm{l}{n}\right) } \dtr\left(\targ{l},\sigma\right)\ .
\end{equation}
Let $K$ be some subgroup of  $\LO{n}$ consisting of unitaries stabilizing the state $\targ{l}$\footnote{In other words, $K$ is a subgroup of the stabilizer group $\mathrm{Stab}\left(\targ{l}\right)$ of $\targ{l}$ in $\LO{n}$.}, i.e.
\begin{equation}\label{eq:stabilisation}
V \in K\ \Rightarrow\ V^{\otimes l} \targ{l} (V^{\otimes l})^\dagger\ =\targ{l}\ .
\end{equation} 
Then the state 
\begin{equation}\label{eq:twirledState}
\tilde{\sigma}_\ast \eqdef \Lambda_K(\sigma_\ast) = \int_{K} dV\ V^{\otimes l} \sigma_\ast (V^{\otimes l})^\dagger  
\end{equation}
satisfies $\dtr\left(\targ{l},\tilde{\sigma}_\ast \right)=\dtr\left(\targ{l},\sigma_\ast \right)=\dsep{l}$. The integration in \cref{eq:twirledState} is done with respect to the Haar measure on $K$.
\end{lemma}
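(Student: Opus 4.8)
The plan is to verify two properties of the twirled state $\tilde{\sigma}_\ast$ defined in \cref{eq:twirledState}: first, that it is still a symmetric separable state, so that it is a legitimate competitor in the minimization defining $\dsep{l}$; and second, that the twirling operation cannot increase the trace distance to $\targ{l}$. Combined with the fact that $\sigma_\ast$ already realizes the minimum, these two facts force $\tilde{\sigma}_\ast$ to realize it as well.

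For separability, I would start from a decomposition $\sigma_\ast=\sum_\alpha p_\alpha \ketbra{\phi_\alpha}{\phi_\alpha}^{\otimes l}$ as guaranteed by \cref{eq:partSEP}. For each fixed $V\in K\subseteq\LO{n}$, the conjugated state is $V^{\otimes l}\sigma_\ast (V^{\otimes l})^\dagger=\sum_\alpha p_\alpha \left(V\ketbra{\phi_\alpha}{\phi_\alpha}V^\dagger\right)^{\otimes l}$, which is again a convex combination of symmetric pure product states and hence lies in $\Sep\left(\symm{l}{n}\right)$. Since $\Sep\left(\symm{l}{n}\right)$ is convex and closed, the normalized Haar average over the compact group $K$ preserves membership, so $\tilde{\sigma}_\ast\in\Sep\left(\symm{l}{n}\right)$.

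For the distance, the key inputs are that the trace distance is unitarily invariant and that it is a norm distance, hence convex in each argument. For every $V\in K$, the stabilization property \cref{eq:stabilisation} together with unitary invariance gives
\[
\dtr\left(V^{\otimes l}\sigma_\ast (V^{\otimes l})^\dagger,\ \targ{l}\right)=\dtr\left(V^{\otimes l}\sigma_\ast (V^{\otimes l})^\dagger,\ V^{\otimes l}\targ{l} (V^{\otimes l})^\dagger\right)=\dtr\left(\sigma_\ast,\ \targ{l}\right)=\dsep{l}.
\]
Applying the triangle inequality (convexity of the $1$-norm) to the normalized Haar integral then yields
\[
\dtr\left(\tilde{\sigma}_\ast,\ \targ{l}\right)\leq \int_K dV\ \dtr\left(V^{\otimes l}\sigma_\ast (V^{\otimes l})^\dagger,\ \targ{l}\right)=\dsep{l}.
\]
On the other hand, since $\tilde{\sigma}_\ast\in\Sep\left(\symm{l}{n}\right)$, the definition of $\dsep{l}$ as a minimum forces $\dtr\left(\tilde{\sigma}_\ast,\targ{l}\right)\geq\dsep{l}$. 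The two inequalities pin the distance exactly to $\dsep{l}$, completing the argument.

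I do not expect a genuine obstacle here, as this is the standard ``symmetrization of the optimizer'' trick. The only points requiring care are purely technical: checking that the Haar integral over the compact group $K$ is well defined and that the convexity bound passes through it (which follows from continuity and convexity of the trace distance together with the normalization $\int_K dV=1$), and confirming that $\Sep\left(\symm{l}{n}\right)$ is topologically closed so that the averaged state remains separable. These are routine, and the entire content of the lemma is captured by the unitary-invariance-plus-convexity computation above.
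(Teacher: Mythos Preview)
Your proposal is correct and follows essentially the same route as the paper. The only cosmetic difference is that where you invoke unitary invariance plus convexity of the trace norm to bound $\dtr(\tilde{\sigma}_\ast,\targ{l})\leq\dsep{l}$, the paper packages the same step as an application of the data-processing inequality for the CPTP map $\Lambda_K$ (together with $\Lambda_K(\targ{l})=\targ{l}$); since $\Lambda_K$ is a mixture of unitaries, the two formulations are equivalent.
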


\begin{proof}
First note that $\Lambda_K$ satisfies $\Lambda_K(\targ{l})=\targ{l}$. Moreover, $\Lambda_K$ is a CPTP map and so, by the data-processing inequality \cite{Nielsen2010}	we obtain
\begin{equation}\label{eq:INEQUALITIEScp}
\dtr\left(\targ{l},\sigma_\ast\right)\geq \dtr\left(\Lambda_K\left(\targ{l}\right),\Lambda_K\left(\sigma_\ast\right)\right)= \dtr\left(\targ{l},\tilde{\sigma}_\ast\right)\ .
\end{equation}
Now note that transformations of the form $V^{\otimes l}$ preserve separable symmetric states. This means that $\Lambda_K\left(\Sep\left(\symm{l}{n}\right)\right)\subset \Sep\left(\symm{l}{n}\right)$, and hence $\tilde{\sigma}_\ast \in \Sep\left(\symm{l}{n}\right)$. From the definition of $\sigma_\ast$ we thus get $ \dtr\left(\targ{l},\tilde{\sigma}_\ast\right) \geq  \dtr\left(\targ{l},\sigma_\ast\right)$ which together with Eq. \eqref{eq:INEQUALITIEScp} concludes the proof. 
\end{proof}

\begin{remark}
Inspection of the above proof shows that an analogous result holds if $\targ{l}$ is replaced by arbitrary $l$-particle symmetric $n$-mode state $\rho\in\D\left( \symm{l}{n} \right)$ and $K$ is replaced by any subgroup of the stabilizer of $\rho$ in $\LO{n}$. 
\end{remark}

The following subgroups of the stabilizer of $\targ{l}$ will prove especially useful in the computation of $\dsep{l}$ and finding $\sigma_\ast$.
\begin{align}\label{eq:stabSUBROUPS}
K_c & \eqdef \SET{  V_{\varphi_1,\ldots,\varphi_n} }{ \varphi_i \in[0,2\pi) }\ , \\
K_d & \eqdef \SET{V_\pi}{V_\pi \ket{i} \eqdef \ket{\pi(i)},\ \pi\in\mathcal{P}(n)}\ \nonumber,
\end{align}
where $V_{\varphi_1,\ldots,\varphi_n} = \exp\left(\ii i \sum_{i=1}^n \varphi_i \ketbra{i}{i}\right)$ and $\mathcal{P}(n)$ is the permutation group of the set $[n]$. Intuitively speaking, $K_c$ represents linear-optical transformations that are diagonal in the Fock basis, while $K_d$ represents permutations of the $n$ modes. Using group-theoretic techniques similar to the ones used in \cite{Bouland2014,Sawicki2016} we can show that, by composing elements from $K_c$ and $K_d$, one can obtain the entire stabilizer of $\targ{l}$ in $\LO{n}$ (however, in what follows we do not need this result).  
The twirling operations for $K_c$ and $K_d$ read
\begin{align}\label{eq:explicitINT}
\Lambda_{K_c} (\rho) & = \frac{1}{(2\pi)^n}  \int_0^{2\pi}  d \varphi_1 \ldots  \int_0^{2\pi}d  \varphi_n   \left(V_{\varphi_1,\ldots,\varphi_n}\right)^{\otimes l}  \rho \left(V_{\varphi_1,\ldots,\varphi_n}\dg \right) ^{\otimes l}\ , \\
\Lambda_{K_d} (\rho) & = \frac{1}{l!} \sum_{\pi\in \mathcal{P}(n)} V_{\pi}^{\otimes l}  \rho  \left(V_{\pi}^\dagger\right)^{ \otimes l} \nonumber  \ .
\end{align}

\begin{corollary}\label{cor:doubleAV}
By application of Lemma \ref{lem:stabOPT} first to $K_c$ and then to $K_d$, we obtain that the closest separable symmetric state to $\targ{l}$ (in trace distance) can be chosen as $\sigma_\ast = \Lambda_{K_d} \circ  \Lambda_{K_c} \left(\sigma\right)$, for $\sigma\in \Sep\left(\symm{l}{n}\right)$. 
\end{corollary}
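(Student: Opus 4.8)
The plan is to obtain $\sigma_\ast$ by chaining two applications of Lemma~\ref{lem:stabOPT}, once for $K_c$ and once for $K_d$. The only genuine content is to verify that both $K_c$ and $K_d$ are subgroups of the stabilizer of $\targ{l}$, after which the statement follows by routine bookkeeping.

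First I would record that a minimizer exists: $\Sep(\symm{l}{n})$ is a closed, bounded (hence compact) subset of the finite-dimensional space of Hermitian operators, and $\sigma \mapsto \dtr(\targ{l},\sigma)$ is continuous, so the infimum defining $\dsep{l}$ in \cref{eq:distanceKN} is attained by some $\sigma_0 \in \Sep(\symm{l}{n})$.

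Next I would check the stabilization hypothesis \cref{eq:stabilisation} for both subgroups. The state $\targ{l}$ is diagonal in the Dicke basis, being a uniform mixture of the projectors $\ketbra{\n}{\n}$ over occupation vectors with $|\n|=l$ and $n_i\leq 1$ [cf.\ \cref{eq:target}]. For $V=V_{\varphi_1,\ldots,\varphi_n}\in K_c$, each $\ket{\n}$ is an eigenvector of $V^{\otimes l}$, so the phases cancel under conjugation and $V^{\otimes l}\ketbra{\n}{\n}(V^{\otimes l})^\dagger=\ketbra{\n}{\n}$; hence $K_c$ stabilizes $\targ{l}$. For $V_\pi\in K_d$, the unitary $V_\pi^{\otimes l}$ sends each Dicke state $\ket{\n}$ to the Dicke state whose occupation vector is obtained by permuting the entries of $\n$ according to $\pi$; since the constraints $|\n|=l$ and $n_i\leq 1$ are preserved by such a permutation and the mixture in \cref{eq:target} is uniform, we get $V_\pi^{\otimes l}\targ{l}(V_\pi^{\otimes l})^\dagger=\targ{l}$, so $K_d$ stabilizes $\targ{l}$ as well.

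With both hypotheses verified, I would apply Lemma~\ref{lem:stabOPT} first with $K=K_c$ to the minimizer $\sigma_0$, obtaining that $\sigma_1\eqdef\Lambda_{K_c}(\sigma_0)\in\Sep(\symm{l}{n})$ again satisfies $\dtr(\targ{l},\sigma_1)=\dsep{l}$, i.e.\ $\sigma_1$ is itself a minimizer. This is the one point worth emphasizing: the output of the lemma is again a closest separable state, so its hypotheses are met for a second invocation. Applying Lemma~\ref{lem:stabOPT} once more, now with $K=K_d$ and starting from $\sigma_1$, yields $\sigma_\ast\eqdef\Lambda_{K_d}(\sigma_1)=\Lambda_{K_d}\circ\Lambda_{K_c}(\sigma_0)\in\Sep(\symm{l}{n})$ with $\dtr(\targ{l},\sigma_\ast)=\dsep{l}$. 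Setting $\sigma=\sigma_0$ gives the claimed form. I expect no real obstacle here; the only care needed is in confirming the invariance of $\targ{l}$ under both groups and in noting that the minimizer property is preserved between the two applications.
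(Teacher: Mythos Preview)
Your proposal is correct and follows exactly the route the paper intends: the corollary is stated without a separate proof in the paper, as it is meant to follow immediately from two successive applications of Lemma~\ref{lem:stabOPT} once one notes (as the paper asserts just before the corollary) that $K_c$ and $K_d$ stabilize $\targ{l}$. You have simply made explicit the existence of a minimizer and the stabilization checks that the paper leaves to the reader.
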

We now describe the action of the operations of Eq.\eqref{eq:explicitINT} on product symmetric states $\ketbra{\phi}{\phi}^{\otimes l}$. To this end we need to introduce some auxiliary notation. We define  \emph{the type} of the $n$-mode occupation number string $\n=(n_1,\ldots,n_n)$ as the string  $\t(\n)=(\tau_1,\ldots,\tau_n)$ consisting of non-increasingly ordered components of $\n$. Two Fock states $\ket{\n}$ and $\ket{\n'}$ can be mapped to each other by a linear-optical transformations from $V_\pi \in K_d$ if and only if $\t(\n)=\t(\n')$. Just like for occupation-number strings, we define the length of the type $\t$ as $|\t|\eqdef \sum_{i=1}^n \tau_i$. We use the following notation for multinomial coefficients: $\binom{|\t|}{\t}\eqdef\frac{|\t|!}{\tau_{1}! \cdots \tau_{n}!}$. For a given type $\t$ we can define the corresponding mixed state 
\begin{equation}\label{eq:permutAVfock}
\rho_\tau \eqdef \frac{1}{\NN(\t)}\sum_{\n :\t(\n)=\t} \ketbra{\n}{\n}\ , 
\end{equation}
where $\NN(\t)$ is a normalization constant equal to the cardinality of the set $\SET{\n}{\t(\n)=\t}$. In particular, for $\t_0 =(\overbrace{1,1,\ldots,1}^{l},0,\ldots,0)$  we have $\NN(\t_0)=\binom{n}{l}$ and  $\rho_{\t_0}=\targ{l}$. For s vector $\alpha=(\alpha_1, \ldots, \alpha_n)$ and an occupation number string $\n$, define $|\alpha|^{2\n}\eqdef \prod_{i=1}^n |\alpha_i|^{2n_i}$. Finally, to state our results we also need the following polynomial in $\alpha$ depending on the type $\t$: 
\begin{equation}\label{eq:symPOLY}
m_{\t}(\alpha)\eqdef \sum_{\n: \t(\n)=\t} |\alpha|^{2 \n}\ .
\end{equation}
  
\begin{lemma}[Structure of the twirled product states]\label{lem:twirledSTATES}
Let $\ket{\phi}=\sum_{i=1}^n \alpha_i \ket{i}$ be a pure state. Let $K_c$, $K_d$ be as in Eq.\eqref{eq:stabSUBROUPS}. Let $\Lambda_K$  be a twirling map defined by Eq.\eqref{eq:twirledState}. Then we have the following result
\begin{equation}\label{eq:Avgtotal}
\Lambda_{K_d} \circ \Lambda_{K_c} \left(\ketbra{\phi}{\phi}^{\otimes l}\right) = \sum_{\t: |\t|=l} m_{\t}(\alpha) \binom{l}{\t} \rho_{\t}
\end{equation}
with the notation defined below Corollary \ref{cor:doubleAV}.
\end{lemma}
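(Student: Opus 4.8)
The plan is to work entirely in the first-quantized picture on $\L\C^n\R^{\otimes l}$ and to evaluate the two twirls in the order written, exploiting that each acts diagonally in a convenient basis. First I would expand the product state in the computational basis, writing $\ket{\phi}^{\otimes l}=\sum_{\bi}\alpha_{i_1}\cdots\alpha_{i_l}\ket{\bi}$ with $\bi=(i_1,\ldots,i_l)\in[n]^l$, so that $\ketbra{\phi}{\phi}^{\otimes l}=\sum_{\bi,\bj}\alpha_{\bi}\bar\alpha_{\bj}\ketbra{\bi}{\bj}$ where $\alpha_{\bi}\eqdef\prod_t\alpha_{i_t}$. The key bookkeeping observation is that $\alpha_{\bi}$ depends only on the occupation-number vector $\n(\bi)$ of the string $\bi$, namely $\alpha_{\bi}=\prod_{k=1}^n\alpha_k^{n_k}$, so that on the surviving terms the coefficients will assemble into the quantities $|\alpha|^{2\n}$.

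Next I would compute the continuous twirl $\Lambda_{K_c}$. Since $V_{\varphi_1,\ldots,\varphi_n}^{\otimes l}\ket{\bi}=\exp\L-\ii\sum_k n_k\varphi_k\R\ket{\bi}$, the conjugation $V^{\otimes l}\ketbra{\bi}{\bj}(V^\dagger)^{\otimes l}$ acquires the phase $\exp\L-\ii\sum_k(n_k-n'_k)\varphi_k\R$ with $\n=\n(\bi)$ and $\n'=\n(\bj)$. Averaging each $\varphi_k$ over $[0,2\pi)$ yields $\prod_k\delta_{n_k,n'_k}$, so only terms with $\n(\bi)=\n(\bj)$ survive. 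For each fixed $\n$ with $|\n|=l$, the surviving sum $\sum_{\n(\bi)=\n(\bj)=\n}\ketbra{\bi}{\bj}$ equals $\binom{l}{\n}\ketbra{\n}{\n}$, since there are exactly $\binom{l}{\n}$ strings of occupation $\n$ and the unnormalized symmetric sum is $\sqrt{\binom{l}{\n}}\,\ket{\n}$. Together with $\alpha_{\bi}\bar\alpha_{\bj}=|\alpha|^{2\n}$ on the diagonal, this gives the Dicke-diagonal intermediate state
\[
\Lambda_{K_c}\L\ketbra{\phi}{\phi}^{\otimes l}\R=\sum_{\n:|\n|=l}|\alpha|^{2\n}\binom{l}{\n}\ketbra{\n}{\n}.
\]

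Finally I would apply the permutation twirl $\Lambda_{K_d}$ to each $\ketbra{\n}{\n}$. Because $V_\pi^{\otimes l}\ket{\n}=\ket{\pi\cdot\n}$ permutes occupation vectors within a fixed type, averaging over $\mathcal{P}(n)$ sends $\ketbra{\n}{\n}$ to the uniform mixture over its orbit; as that orbit has exactly $\NN(\t(\n))$ elements, each reached by the same number of permutations, the mixture is precisely $\rho_{\t(\n)}$ of \cref{eq:permutAVfock}. Using that the multinomial coefficient is symmetric in its arguments, $\binom{l}{\n}=\binom{l}{\t(\n)}$ depends only on the type, so I would group the result by type and recognize $\sum_{\n:\t(\n)=\t}|\alpha|^{2\n}=m_{\t}(\alpha)$ from \cref{eq:symPOLY}, arriving at $\sum_{\t:|\t|=l}m_{\t}(\alpha)\binom{l}{\t}\rho_{\t}$, as claimed.

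The argument carries no analytic difficulty; the only point requiring care is the interplay of the two combinatorial factors. Concretely, I must check that the multiplicity $\binom{l}{\n}$ produced by the phase average is genuinely constant on each type, and that the stabilizer–orbit factor $n!/\NN(\t)$ in the permutation average cancels cleanly so that one obtains the \emph{normalized} $\rho_{\t}$ rather than an unnormalized sum. I would also remark that the order of the two twirls is immaterial, but performing $\Lambda_{K_c}$ first is convenient precisely because it diagonalizes the state in the Dicke basis, on which $\Lambda_{K_d}$ then acts by a simple relabeling.
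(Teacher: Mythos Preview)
Your proposal is correct and follows essentially the same route as the paper: expand in the product basis $\ket{\bi}$, use the $K_c$ average to decohere to the Dicke basis obtaining $\sum_{\n}\binom{l}{\n}|\alpha|^{2\n}\ketbra{\n}{\n}$, then apply the $K_d$ average to send $\ketbra{\n}{\n}\mapsto\rho_{\t(\n)}$ and group by type. The only cosmetic difference is that the paper inserts the symmetric projector and the normalization factor $N(\n)=\binom{l}{\n}^{-1/2}$ explicitly before averaging, whereas you compute the phase twirl directly on $\ketbra{\bi}{\bj}$ and identify $\sum_{\n(\bi)=\n(\bj)=\n}\ketbra{\bi}{\bj}=\binom{l}{\n}\ketbra{\n}{\n}$ in one step; the combinatorics and the final formula are identical.
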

The proof of the above Lemma is cumbersome, so we present it in \cref{app:technicRES}. The above characterization of the twirled product states allows us to obtain the following important result.

\begin{lemma}[A seed state can be chosen to be a pure product state] \label{lem:optPURE} 
Let $\sigma =\sum_k p_k  \ketbra{\phi_k}{\phi_k}^{\otimes l}$ be an arbitrary symmetric separable state on $\symm{l}{n}$. Let $K_c$, $K_d$ be as in Eq.\eqref{eq:stabSUBROUPS} and let $\Lambda_K$  be the map defined by Eq.\eqref{eq:twirledState}. Then the trace distance between $\Lambda_{K_d} \circ \Lambda_{K_c}(\sigma)$ and $\targ{l}$ is given by
\begin{equation}\label{eq:distanceSEPexpl}
\dtr\left(\targ{l}, \Lambda_{K_d} \circ \Lambda_{K_c}(\sigma)\right) = 1 -\sum_k p_k q \L \ketbra{\phi_k}{\phi_k} \R\ .
\end{equation}
where for $\ket{\phi}=\sum_{i=1}^n \alpha_i \ket{i}$ we have defined $q(\kb{\phi}{\phi})\eqdef l! m_{\t_0}(\alpha)$. Therefore, the closest separable symmetric state to $\targ{l}$ (in trace distance) can be chosen as $\sigma_\ast = \Lambda_{K_d} \circ \Lambda_{K_c}\left(\ketbra{\phi}{\phi}^{\otimes l}\right)$, for a suitable product state $\ketbra{\phi}{\phi}^{\otimes l}$.  
\end{lemma}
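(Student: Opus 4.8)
The plan is to combine \cref{lem:twirledSTATES} with the mutual orthogonality of the states $\rho_\t$, so that the entire trace-distance computation reduces to reading off a single coefficient. First, using linearity of the twirling channels and applying \cref{eq:Avgtotal} to each term of $\sigma=\sum_k p_k\ketbra{\phi_k}{\phi_k}^{\otimes l}$ (writing $\ket{\phi_k}=\sum_{i=1}^n\alpha^{(k)}_i\ket{i}$), I would obtain
\begin{equation*}
\Lambda_{K_d}\circ\Lambda_{K_c}(\sigma)=\sum_{\t:\,|\t|=l}c_\t\,\rho_\t,\qquad c_\t=\binom{l}{\t}\sum_k p_k\,m_\t\!\L\alpha^{(k)}\R\ .
\end{equation*}

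Second, I would invoke the crucial structural fact that the states $\rho_\t$ of \cref{eq:permutAVfock} have mutually orthogonal supports: each $\rho_\t$ is diagonal in the Fock basis and supported on the Fock states of type $\t$, and these classes are disjoint for distinct types. Since the left-hand side above is a density matrix built from normalized, orthogonally-supported $\rho_\t$ with $c_\t\geq 0$ (the latter evident from the definition of $m_\t$), taking traces forces $\sum_\t c_\t=1$. Recalling that $\targ{l}=\rho_{\t_0}$ for $\t_0=(1,\ldots,1,0,\ldots,0)$ and that $\binom{l}{\t_0}=l!$, the coefficient attached to $\targ{l}$ is precisely $c_{\t_0}=\sum_k p_k\,l!\,m_{\t_0}\!\L\alpha^{(k)}\R=\sum_k p_k\,q\!\L\ketbra{\phi_k}{\phi_k}\R$.

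Third, I would read off the trace distance directly from the orthogonal decomposition. Setting $\mu\eqdef\Lambda_{K_d}\circ\Lambda_{K_c}(\sigma)$, the difference $\mu-\targ{l}=(c_{\t_0}-1)\rho_{\t_0}+\sum_{\t\neq\t_0}c_\t\,\rho_\t$ is block-diagonal over orthogonal supports, so its trace norm is additive over blocks: $\lVert\mu-\targ{l}\rVert_1=(1-c_{\t_0})\,\lVert\rho_{\t_0}\rVert_1+\sum_{\t\neq\t_0}c_\t\,\lVert\rho_\t\rVert_1=(1-c_{\t_0})+(1-c_{\t_0})=2\L1-c_{\t_0}\R$, where I used $\lVert\rho_\t\rVert_1=1$ and $\sum_{\t\neq\t_0}c_\t=1-c_{\t_0}$. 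Dividing by two gives exactly \cref{eq:distanceSEPexpl}.

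Finally, for the claim that the optimizer may be taken as the twirl of a single pure product state, I would argue by convexity. By \cref{cor:doubleAV} the closest separable state can be assumed of the form $\Lambda_{K_d}\circ\Lambda_{K_c}(\sigma)$, so minimizing its trace distance to $\targ{l}$ is equivalent to maximizing the convex combination $\sum_k p_k\,q\!\L\ketbra{\phi_k}{\phi_k}\R$; a convex combination never exceeds its largest summand, so the maximum is attained by placing all weight on a single pure state $\ket{\phi}$ maximizing $q\!\L\ketbra{\phi}{\phi}\R$, whence $\sigma_\ast=\Lambda_{K_d}\circ\Lambda_{K_c}\!\L\ketbra{\phi}{\phi}^{\otimes l}\R$. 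I expect the one genuinely delicate point to be the justification of the orthogonality of the $\rho_\t$ and the consequent collapse of the trace norm; the remaining steps follow routinely from linearity and convexity, leaving the actual optimization of $q$ over pure seed states to the subsequent \cref{lem:optialisationPROD}.
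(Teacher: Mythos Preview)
Your proof is correct and follows essentially the same route as the paper: both exploit the orthogonality of the supports of the $\rho_\t$ to reduce the trace-distance computation to the single coefficient $c_{\t_0}=\sum_k p_k\,q(\ketbra{\phi_k}{\phi_k})$, and then invoke convexity to conclude that a pure seed suffices. The only cosmetic difference is that the paper lumps all $\t\neq\t_0$ into a single state $\targ{l}^\perp$ supported orthogonally to $\targ{l}$, whereas you keep the finer type-by-type decomposition; the resulting trace-norm computation is identical.
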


\begin{proof}
Let $\ket{\phi}=\sum_{i=1}^n \alpha_i \ket{i}$. We note that Eq.\eqref{eq:Avgtotal} implies the following convex decomposition 
\begin{equation}\label{eq:twirledSTRUCTURE}
\Lambda_{K_d} \circ \Lambda_{K_c} \left(\ketbra{\phi}{\phi}^{\otimes l}\right) = q \targ{l} +(1-q)\targ{l}^\perp\ ,
\end{equation} 
where $\targ{l}^\perp$ is supported on the subspace orthogonal to the support of $\targ{l}$ and 
\begin{equation}\label{eq:formOFq}
q:= q(\ketbra{\phi}{\phi})=l! m_{\t_0}(\alpha)\ ,
\end{equation}
Now, for an arbitrary separable symmetric state $\sigma =\sum_k p_k  \ketbra{\phi_k}{\phi_k}^{\otimes l}$  we obtain 
\begin{equation}
\Lambda_{K_d} \circ \Lambda_{K_c} \left(\sigma\right) = q' \targ{l} +(1-q')\targ{l}'\ ,
\end{equation}
where $q'= \sum_k p_k q_k $, $q_k= q(\ketbra{\phi_k}{\phi_k})$ and again $\targ{l}'$ is supported on the subspace orthogonal to the support of $\targ{l}$. Now, using the explicit  formula for the trace distance given in \cref{eq:TRACEdist}, we obtain directly \cref{eq:distanceSEPexpl}.
\end{proof}

We now state the result that settles the optimization problem for seed symmetric product states. This together with Lemma \ref{lem:optPURE} and Corollary \ref{cor:doubleAV} concludes the proof Theorem \ref{thm:main1}.

\begin{lemma}[Optimization over seed symmetric product states]\label{lem:optialisationPROD} 
Let $\ket{\phi}=\sum_{i=1}^n \alpha_i \ket{i}$ be a pure state on $\C^n$. The maximal value of the function 
\begin{equation}\label{eq:funcSINGprod}
q(\ketbra{\phi}{\phi}) = l! m_{\t_0}(\alpha)\ .
\end{equation}
is attained for the ``maximally coherent'' state $\ket{\phi_0}=\frac{1}{\sqrt{n}}\sum_{i=1}^n \ket{i}$. Consequently, 
\begin{equation}\label{eq:finalOPTIMIZATION}
\min_{\ket{\phi}\in\C^n, \bra{\phi}\phi\rangle=1} \dtr\left(\targ{l},\Lambda_{K_d} \circ \Lambda_{K_c}\left(\ketbra{\phi_0}{\phi_0}^{\otimes l}\right)\right)\ =1 - \frac{n!}{n^{l} (n-l)!}\ .
\end{equation}  
\end{lemma}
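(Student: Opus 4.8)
The plan is to recognize that the whole content of the lemma is a symmetric-function optimization on the probability simplex. First I would unpack the definition of $m_{\t_0}(\alpha)$. Since $\t_0=(\overbrace{1,\ldots,1}^{l},0,\ldots,0)$, the occupation strings $\n$ with $\t(\n)=\t_0$ are exactly the binary strings with $n_i\in\{0,1\}$ and $\sum_i n_i=l$; for each such string $|\alpha|^{2\n}=\prod_{i:\,n_i=1}|\alpha_i|^2$. Writing $x_i\eqdef|\alpha_i|^2$, this gives
\begin{equation*}
m_{\t_0}(\alpha)=\sum_{\substack{S\subseteq[n]\\ |S|=l}}\prod_{i\in S}x_i = e_l(x_1,\ldots,x_n),
\end{equation*}
the $l$-th elementary symmetric polynomial in the $x_i$. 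Because $\ket{\phi}$ is a unit vector, the $x_i$ satisfy $x_i\geq 0$ and $\sum_i x_i=1$, so maximizing $q(\ketbra{\phi}{\phi})=l!\,m_{\t_0}(\alpha)$ is equivalent to maximizing $e_l$ over the standard simplex.

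The heart of the argument is then the claim that $e_l$ attains its maximum on the simplex at the uniform point $x_i=1/n$. I would establish this by showing that $e_l$ is Schur-concave on the positive orthant, so that $e_l(x)\leq e_l(1/n,\ldots,1/n)$ follows because the uniform vector is majorized by every probability vector. By the Schur--Ostrowski criterion it suffices to check $(x_i-x_j)(\partial_i-\partial_j)e_l\leq 0$. Using $\partial_i e_l=e_{l-1}(x_{\hat i})$ (differentiation drops the $i$-th variable) together with the splitting $e_{l-1}(x_{\hat i})=e_{l-1}(x_{\widehat{ij}})+x_j\,e_{l-2}(x_{\widehat{ij}})$, where $x_{\widehat{ij}}$ denotes the variables with both $x_i$ and $x_j$ removed, one obtains $(\partial_i-\partial_j)e_l=(x_j-x_i)\,e_{l-2}(x_{\widehat{ij}})$, whence
\begin{equation*}
(x_i-x_j)(\partial_i-\partial_j)e_l=-(x_i-x_j)^2\,e_{l-2}(x_{\widehat{ij}})\leq 0,
\end{equation*}
since $e_{l-2}\geq 0$ for nonnegative arguments. (Alternatively, the bound $e_l\leq\binom{n}{l}n^{-l}$ on the simplex is exactly Maclaurin's inequality $S_l^{1/l}\leq S_1$ with $S_1=e_1/n=1/n$, which one could simply invoke.) This certifies that $\ket{\phi_0}=\tfrac1{\sqrt n}\sum_i\ket{i}$ is optimal.

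Finally I would evaluate at the optimum: $e_l(1/n,\ldots,1/n)=\binom{n}{l}n^{-l}$, so $q_{\max}=l!\binom{n}{l}n^{-l}=\tfrac{n!}{(n-l)!\,n^l}$. Combining this with \cref{lem:optPURE}, which gives $\dtr(\targ{l},\Lambda_{K_d}\circ\Lambda_{K_c}(\sigma))=1-\sum_k p_k\,q(\ketbra{\phi_k}{\phi_k})$ and is therefore minimized by concentrating all weight on a single optimal seed, and with \cref{cor:doubleAV}, which guarantees the closest separable state has this twirled form, yields $\dsep{l}=1-\tfrac{n!}{n^l(n-l)!}$, proving both the lemma and \cref{thm:main1}.

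I expect the only genuinely substantive step to be the optimization of $e_l$: establishing rigorously that the uniform point is the \emph{global} maximizer rather than merely a stationary point. A Lagrange-multiplier analysis alone only locates critical points (which by symmetry include the uniform one), so I would prefer the Schur-concavity/Maclaurin route above, which directly certifies globality; identifying $m_{\t_0}$ with $e_l$ and the closing arithmetic are then routine.
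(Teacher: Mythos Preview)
Your proposal is correct and follows essentially the same route as the paper: identify $m_{\t_0}(\alpha)$ with the elementary symmetric polynomial $e_l$ in the $|\alpha_i|^2$, invoke Schur-concavity to conclude the uniform point is optimal, and evaluate. The only difference is that you supply a self-contained proof of Schur-concavity via the Schur--Ostrowski criterion (and mention Maclaurin's inequality as an alternative), whereas the paper simply cites Schur-concavity of elementary symmetric polynomials as a known fact.
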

\begin{proof}
Note that $q(\ketbra{\phi}{\phi})$ is proportional to the elementary symmetric polynomial of degree $l$ in variables $\textbf{p}=(p_1,\ldots, p_n)$, where  $p_i=|\alpha_i|^2$. Elementary symmetric polynomials are Schur-concave \cite{SchurConcave1991} i.e.\ satisfy $q(\textbf{p})\leq q(S \textbf{p})$, for any doubly-stochastic $n\times n$ matrix  $S$. Setting $S=\frac{1}{n}\Id$ we obtain that for any pure state $\ket{\phi}$  we have the inequality $q(\kb{\phi_0}{\phi_0}) \geq  q(\kb{\phi}{\phi})$. Finally, Eq.\eqref{eq:finalOPTIMIZATION} follows by inserting $|\alpha_i|^2=1/n$ into Eq.\eqref{eq:funcSINGprod} and using Eq.\eqref{eq:distanceSEPexpl}.
\end{proof}
\begin{remark}
For the optimal pure state $\ket{\psi_0}$ from \eqref{eq:finalOPTIMIZATION} we have
\begin{equation}
\Lambda_{K_d} \circ \Lambda_{K_c}\left(\ketbra{\phi_0}{\phi_0}^{\otimes l}\right)= \Lambda_{K_c}\left(\ketbra{\phi_0}{\phi_0}^{\otimes l}\right)\ ,
\end{equation}
which gives exactly the form of the optimal separable state given in Theorem \ref{thm:main1}.

\end{remark}

\subsection{Extraction of uniform losses from a lossy network}\label{sec:proofEXTR}

Let us now prove \cref{thm:PULLoutNOISE}. Our proof is constructive, in the sense that it follows directly from an algorithm to obtain $\tilde{\N}$ from $\N$. A central ingredient of our proof is the equivalence shown in \cref{fig:buildingblock}, namely that two equal losses at the output modes of a beamsplitter can be commuted to left to act on its input modes.   Throughout this section we will say that losses can be ``pulled out'' from a network if, by repeated application of that identity to individual beamsplitters, they can moved all the way to the input of the circuit. Whenever we pull out one loss element for \emph{each} input in the network, we say we pulled out one uniform layer of losses. The network $\tilde{\N}$ will be obtained by pulling out $s$ layers of losses from $\N$, one at a time. 

Recall that, by assumption, $\N$ is such that every path from the input to the output has to traverse at least $s$ beamsplitters. Let us temporarily call the \emph{length} of a path within the network the number of loss elements that a photon would traverse following that path, irrespective of the actual number of beamsplitters. This distinction is initially irrelevant since every beamsplitter has exactly one loss element in each of its input paths, but it becomes important in intermediate stages after we start to commute loss elements around. Let us call a path that connects an input to an output an \IO-path. By assumption, the shortest \IO-path in $\mathcal{N}$ has length $s$. 

We now describe the procedure step by step.

\begin{quote}
\step 0: Label every beamsplitter in $\N$ by the shortest path between it and \emph{any} input. There might be exponentially within the network, but this procedure can be done efficiently if it is performed in a breadth-first manner. First, label all beamsplitters immediately connected to the inputs as 1. Then, move to all beamsplitters that are connected to those and label them. Repeat this procedure until all beamsplitters have been labeled. For any beamsplitter, its label is at most 1 greater than the smaller of the labels of the two that immediately precede it. This procedure can be done in time linear in the number of beamsplitters, and is exemplified in \cref{fig:numberednetwork}.
\end{quote}

\begin{figure}
    \centering
    \includegraphics[width=0.5\textwidth]{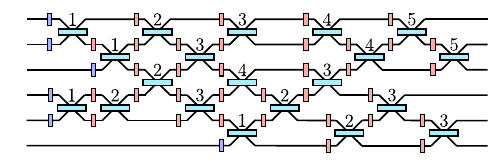}
    \caption{Example of an unbalanced network on which we can apply the procedure outlined in \cref{sec:proofEXTR}. Each beamsplitter is labeled by the smallest number of loss elements between it and any input. The loss elements marked in blue are those described in \step 1. This network only has nearest-neighbor beamsplitters as it is easier to visualize, but at no point in the proof this is required.}
    \label{fig:numberednetwork}
\end{figure}

\begin{quote}
\textsc{Step} 1: Consider the set of loss elements that lie between a beamsplitter labeled 1 and an input (e.g.\ the highlighted ones in \cref{fig:numberednetwork}). Pull these losses out to the input as an uniform layer, and remove them from the network. This reduces the shortest \IO-path by 1. Recompute the labels of all beamsplitters accordingly.
\end{quote}

We now apply mathematical induction. Suppose that, on \step $j-1$, it was possible to pull out losses from every beamsplitter labeled 1 at that stage, and that these losses were pulled out as a uniform layer (which we then removed from the network, relabeling all beamsplitters in the process). Furthermore, suppose that, by this procedure, the length of the shortest \IO-path was reduced by 1. Let $G_{j-1}$ be the set of beamsplitters that were relabeled from 1 to 0 in that step. Then apply the following.

\begin{figure}
    \centering
    \includegraphics[width=0.7\textwidth]{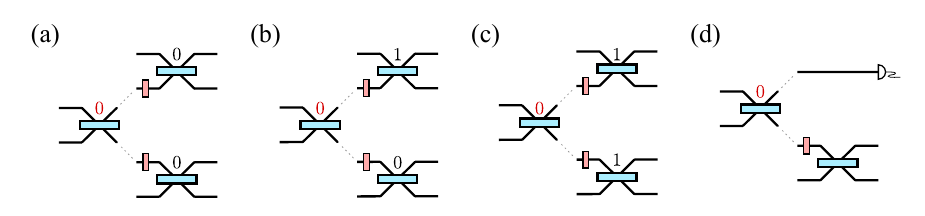}
    \caption{The four possible configurations of a beamsplitter from set $G_{j-1}$, as described in \step $j$ in the main text.}
    \label{fig:commutingcases}
\end{figure}

\begin{quote}
\step $j$: If the shortest \IO-path has length 0, terminate the procedure, as $s$ layers of losses have already been pulled out. Otherwise consider the beamsplitters in $G_{j-1}$. Each of them must be in one of the configurations of \cref{fig:commutingcases}(a-c), since  \cref{fig:commutingcases}(d) would imply that the shortest \IO-path has length 0. For each, if there are two loss elements at their output, commute them through to the two inputs. We now claim that this must be possible for every beamsplitter in $G_{j-1}$ in every configuration of \cref{fig:commutingcases}(a-c). To see this, note that we only ever commute losses through beamsplitters labeled 0. But at the beginning of \step $j-1$, every beamsplitter in $G_{j-1}$ was labeled 1, and they were only relabeled from 1 to 0 because losses that \emph{preceded} them were pulled out. This means that the procedure has not yet touched the losses that \emph{follow} them along both output modes.

After losses have been commuted through all beamsplitters in $G_{j-1}$, note that this would reverse their label from 0 back to 1. But this places them in the configuration they were at the beginning of \step $j-1$, relative to the part of the network that precedes them. This means that we can pull out a uniform layer of losses, by the inductive hypothesis. After this uniform layer is pulled out and removed from the circuit, relabel all beamsplitter accordingly. Note that a new set of beamsplitters have been relabeled from 1 to 0, all which lie in the immediate future of the beamsplitters in $G_{j-1}$. This set is $G_j$, to be used in the next step. By analyzing the configurations of \cref{fig:commutingcases}(a-c), it is also easy to see that this step has reduced the length of the shortest \IO-path by 1.
\end{quote}

Thus we have shown that the inductive hypothesis also holds for $j$, and so it holds all the way up to \step $s$. This procedure is clearly efficient and pulls out $s$ uniform layers of losses. The network that remains afterwards is $\tilde{\N}$, as in \cref{thm:PULLoutNOISE}, and is a valid linear-optical network with an efficient description, which completes the proof.

For the particular example of \cref{fig:numberednetwork}, the network $\tilde{\N}$ is shown in \cref{fig:Endexample}. In \cref{app:example} we provide a detailed walkthrough of the steps that lead to $\tilde{\N}$. In \cref{fig:Endexample} we notice that some segments in the network seem to accumulate losses in the process of pulling out the uniform layers. This is a curious feature, but does not affect the proof that the procedure works in any way.

\begin{figure}
    \centering
    \includegraphics[width=0.5\textwidth]{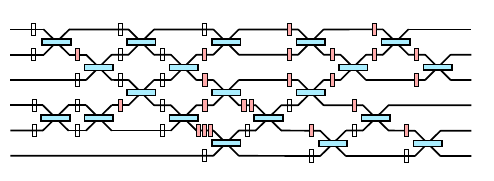}
    \caption{The resulting network after the procedure of \cref{thm:PULLoutNOISE} is applied to the network of \cref{fig:numberednetwork}. Empty boxes correspond to the loss elements that have been removed from the network. Note that the procedure of pulling out the losses may cause the losses that remain to pile up in some locations of the network.}
    \label{fig:Endexample}
\end{figure}



\section{Discussion and open problems} \label{sec:conclusions}

In this paper we proposed a method for efficient classical simulation of linear-optical experiments, most notably boson sampling, in the high-loss regime. We showed that, in an experiment containing initially $n$ photons, if all but $l = \mathrm{o}\left(\sqrt{n}\right)$ of them are lost we can approximate the resulting state by a symmetric separable state of $l$ particles. This state has trace distance $\mathrm{o}\left(l^2/n\right)$ to the state of interest, and so a simulation based on it becomes better and better as the size of the system increases. The fact that this state is separable (in the first quantization representation of linear optics) and has a simple description implies a straightforward efficient classical simulation of any linear-optical protocol that uses it as input. We also showed that in the more physically-realistic model of losses, where each photon has some probability $(1-\tran)$ of being lost (rather than there being a definite number of photons lost) our classical simulation is efficient whenever  $\tran=\o(n^{-1/2})$. Finally, we showed how to perform this simulation in a non-uniform model of loss, i.e.\ when we have different paths in the linear-optical circuit incurring different losses. We showed that, by assuming that \emph{every} beamsplitter in a circuit induces some loss $(1- \tran)$ and that the shortest path that a photon can take through the circuit passes through $s$ beamsplitters, our simulation is efficient whenever $\tranef=\sqrt{m}\tran^s = \o(n^{-1/2})$, or alternatively for networks of minimal depth $s = \mathrm{O}(\log n)$ for some suitably large constant (depending on $\tran$).

Our results imply that, in order to demonstrate some computational advantage within the boson sampling paradigm, experiments will likely have to shift to different architectures. One possibility is to improve the original results of \cite{Aaronson2013a} to show that boson sampling on linear-optical circuits of {\em logarithmic depth} is already hard, for example, because such a circuit can already exhibit sufficient Haar-randomness for the purposes of the proofs of \cite{Aaronson2013a}. Our results place limitations on the capabilities of log-depth circuits, but do not rule out the possibility that sufficiently shallow circuits would lead to hard boson sampling instances--although it is possible to show, using tensor-network methods, that these circuits would require beamsplitters that act between arbitrarily distant modes \cite{Jozsa2006,GarciaPatron2017}. This at least rules out boson sampling with log-depth networks if one is restricted to the two-dimensional integrated architectures of \cite{Broome2013,Crespi2013b,Spring2013,Tillmann2013,Carolan2015}.

We leave several open questions for future work, including:

\begin{itemize}
  \item[(i)] In \cite{Aaronson2016} the authors proved that boson sampling remains hard when O$(1)$ photons are lost, and here we proved that it becomes classically simulable when only O$(\sqrt{n})$ photons remain. The main problem we leave open is at which point between O$(\sqrt{n})$ and $n-$O$(1)$ losses the transition between these two complexity regimes occurs.
  \item[(ii)] Can we get better approximations if we allow for particle-separable states that are not symmetric? Alternatively, can we prove that the closest state to our desired state $\targ{l}$ must be symmetric?
  \item[(iii)] Our results bound the trace distance between the mean-field state and the lossy input state, as in \cref{thm:main1}. This also acts as a bound on the total variation distance between the corresponding distributions when both states are input into \emph{arbitrary} linear-optical transformations. What happens with the total variation distance after a \emph{Haar-random} unitary $U$? Would the distance between the output states after a random $U$ be considerably smaller than the worst-case bound we obtain? Would this change the scaling of our main result and our conclusions?
  \item[(iv)] We showed that, for a network where the shortest path connecting any input to any output is $s$, one can remove $s$ layers of uniform losses. But it is possible to decompose an arbitrary linear-optical unitary such that $s=1$, e.g.\ the standard decomposition of Reck \emph{et al} in \cref{fig:balanced_unbalanced}(b)\cite{Reck1994}. In \cite{Clements2016}, the authors propose a balanced decomposition, as in \cref{fig:balanced_unbalanced}(a), since there losses behave uniformly and induce less non-unitary noise. What if one instead tried to engineer the unbalanced noise in a helpful way? That is, can we argue that boson sampling must be hard even under the non-unitary noise coming from the unbalanced losses in \cref{fig:balanced_unbalanced}(b)? If so, that might provide a way to engineer long networks with small $s$ tailored to circumvent our argument.
  \item[(v)] As discussed in \cref{sec:bosonsampling}, our simulations get better as the size of the problem increases, but seemingly cannot be made arbitrarily precise at the cost of more computational time. This means we do not have an $\epsilon$-simulation as per the definition of \cite{Aaronson2013a,Pashayan2017}, although our notion of simulation agrees with the one used e.g.\ in \cite{Bremner2016}. Can our simulation be improved into the more stringent form of an $\epsilon$-simulation? Alternatively, do these different notions of simulation actually lead to fundamentally different complexity regimes?
  \item[(vi)] Can our simulation be adapted for other variants of boson sampling \cite{Lund2014,Scottblog,Bentivegna2015}, or for other candidates of quantum computational supremacy experiments \cite{Bremner2011,Bremner2016,Morimae2014,Aaronson2016b,Boixo2016}?
  \item[(vii)] Is it possible to derive a version of the de Finetti theorem that, for symmetric states diagonal in Dicke basis, yields a bound stronger than the one in \cref{eq:DEfinetti}?
 
\end{itemize}

\emph{Note:} 
Upon completion of this work, we became aware of a simultaneous work \cite{GarciaPatron2017} that seems to arrive at similar results for classical simulation of boson sampling in the case of uniform losses.

\begin{acknowledgments}
The authors would like to thank Scott Aaronson, Alexander Strelcov, Micha\l\ Horodecki, Hakop Pashayan, Rafa\l\ Demkowicz-Dobrzanski, Zbigniew Pucha\l a,  Martin Kleisch, Ashley Montanaro, and others, for helpful discussions.

M.~O. acknowledges the support of the Homing programme of the Foundation for Polish Science co-financed by the European Union under the European Regional Development Fund. D.~B.~acknowledges financial support by CAPES – Brazilian Federal Agency for Support and Evaluation of Graduate Education within the Ministry of Education of Brazil.
\end{acknowledgments}

\bibliography{bosonsamplingrefs}
	
\appendix
\section{Proofs of technical results}\label{app:technicRES}
In this part of the Appendix we collect proofs of some of the auxiliary results that were needed to obtain main findings of the paper.

\begin{customlemma}{1}\label{lem:auxBOUNDSpr}
Let $ \dsepB{\tran}$ be the trace distance between states $\sigma_\tran$ and $\targ{\tran}$ defined in \cref{eq:FOCKlost} and \cref{eq:diffNsep} respectively. We have the following inequalities 
\begin{align}
\dsepB{\tran}  \leq & \frac{\tran^2 n}{2} + \frac{\tran(1-\tran)}{2}\ , \label{eq:UPPERbound2} \\  
\dsepB{\tran}   \geq & \dsep{\lceil \tran n (1-\delta)n \rceil} \L 1 - \exp\L-\delta^2 n\tran/2\R  \R\ .   \label{eq:LOWERbound2} 
\end{align}
where $\delta\in[0,1]$ and $\lceil x \rceil$ denotes the smallest integer greater than $x$.
\end{customlemma}

\begin{proof}
We first observe that $\dsepB{\tran}$ given in \cref{eq:distETAsep} can be interpreted as the average of $\dsep{l}$, where $l$ is distributed according to the binomial distribution $\Pr(l)$,
\begin{equation}\label{eq:averagePRES}
\dsepB{\tran}= \sum_{l=0}^n \Pr(l) \dsep{l}\ .
\end{equation}  
Now, in order to prove \cref{eq:UPPERbound2} we use the upper bound $\dsep{l}\leq\frac{l^2}{2n}$, which can be immediately proven using mathematical induction in $l$. Applying this bound to \cref{eq:averagePRES} yields 
\begin{equation}
\dsepB{\tran}\leq \sum_{l=0}^n \Pr(l) \frac{l^2}{2n} = \frac{\tran^2 n}{2} + \frac{\tran(1-\tran)}{2}\ ,
\end{equation}
where in the last equality we have used the standard properties of the binomial distribution.

The proof of \cref{eq:LOWERbound2} also uses \cref{eq:averagePRES}. Since $\dsep{l}\geq$ we have 
\begin{equation}
\dsepB{\tran}\geq \sum_{l=l_\ast}^{n} \Pr(l) \dsep{l}\ ,
\end{equation} 
for $l_\ast = \lceil \tran n (1-\delta) \rceil$. Using the fact that $\dsep{l}$ is and increasing function of $l$ (for fixed $n$), we obtain
\begin{equation}
\sum_{l=l_\ast}^{n} \Pr(l) \dsep{l} \geq \dsep{ \lceil \tran n (1-\delta) \rceil} \Pr\L l \geq  \tran n (1-\delta)   \R\ .
\end{equation}
Finally,  \cref{eq:LOWERbound2} follows from the above inequality by applying the Chernoff bound given in \cref{eq:Chernoff}. 
\end{proof}

\begin{customlemma}{3}[Structure of the twirled product states]\label{lem:twirledPRODUCTpr}
Let $\ket{\phi}=\sum_{i=1}^n \alpha_i \ket{i}$ be a pure state on $\C^n$. Let $K_c$, $K_d$ be subgroups of $\LO{n}$ defined in Eq.\eqref{eq:stabSUBROUPS}. Let $\Lambda_K$  be a twirling map defined by Eq.\eqref{eq:twirledState}. Then, we have the following result
\begin{equation}\label{eq:Avgtotalb}
\Lambda_{K_d} \circ \Lambda_{K_c} \left(\ketbra{\phi}{\phi}^{\otimes l}\right) = \sum_{\t: |\t|=l} m_{\t}(\alpha) \binom{l}{\t} \rho_{\t}\ ,
\end{equation}
where we have used the notation introduced below Corollary \ref{cor:doubleAV}.  
\end{customlemma}

\begin{proof}

In many places of this proof we will be using the notation introduced below Corollary \ref{cor:doubleAV} in the main text. We will deal first with the averaging over the group $K_c$. Consider first the pure state $\ketbra{\phi}{\phi}^{\otimes l} \in \D\L \symm{l}{n} \R$, for $\ket{\phi} = \sum_{i=1}^n \alpha_i \ket{i}$ and $\sum_{i=1}^n \lvert \alpha_i \lvert^2$. To simplify the calculations, let us write $\ket{\phi}^{\otimes N} = \sum_{\bi}\alpha_{\bi}\ket{\bi}$, where we defined
\begin{equation}\label{eq:EXTRAnotation}
\begin{cases}
\; \bi \eqdef \left( i_1, i_2 \ldots, i_l \right), \textrm{with } i_k \in \{1, \ldots, n\}\ , \\ 
\; \ket{\bi} \eqdef \ket{i_1} \ldots \ket{i_l}\ ,\\
\; \alpha_{\bi} \eqdef \prod_{j=1}^l \alpha_{i_j}\ .
\end{cases}
\end{equation}
 Now we can write $\ket{\phi}^{\otimes l} = \proj{l}{d} \ket{\phi}^{\otimes l} = \sum_{\bi} \alpha_{\bi} \proj{l}{n} \ket{\bi}$. If we now define $\n(\bi) \eqdef \left(n_1(\bi), \ldots n_n(\bi)\right)$, where $n_k(\bi)$ is the number of elements of $\bi$ equal to $k$, we have
\begin{equation}
\proj{l}{n} \ket{\bi} = N(\n(\bi)) \ket{\n(\bi)}\ ,
\end{equation}
where $\ket{\n(\bi)}$ is the $l$-particle Fock state associated to the occupation number string $\n(\bi)$ and the coefficient $N(\n(\bi))$ is given by
\begin{equation}\label{eq:coefFORM}
N(\n(\bi)) \eqdef \left[\binom{l}{\n(\bi)}\right]^{-1/2} = \sqrt{\frac{\prod_{i=1}^n n_i(\bi)}{l!}} \ ,
\end{equation}
which manifestly depends only on $\n(\bi)$ and not on $\bi$ itself. All these manipulations allow us to write
\begin{equation}
\ketbra{\phi}{\phi}^{\otimes l} = \sum_{\substack {\n, \n': \\ |\n|=|\n'|=l}  } \sum_{\substack{\bi, \bi': \\ \n(\bi)=\n \\ \n(\bi')=\n'}} \alpha_{\bi} \overline{\alpha}_{\bi'} N(\n) N(\n') \ketbra{\n}{\n'}.
\end{equation}
Elements of the form $V^{\otimes l }$, for $V\in K_c$  act on $\ket{\n}$ just as a phase and is easy to see that averaging over $K_c$ just decoheres any state to the Fock basis. In particular we have
\begin{equation}\label{eq:firstAV}
\Lambda_{K_c} \L \ketbra{\phi}{\phi}^{\otimes l} \R = \sum_{\n: |\n|=l} \sum_{\substack{\bi, \bi': \\ \n(\bi)=\n(\bi')=\n}} \alpha_{\bi} \overline{\alpha}_{\bi'} N(\n)^2 \ketbra{\n}{\n}\  .
\end{equation}
We note now that if $\n(\bi)=\n(\bi')$, then $\alpha_{\bi}=\alpha_{\bi'}$  and therefore $\alpha_{\bi} \overline{\alpha}_{\bi'}=|\alpha_{\bi}|^2=\prod_{=1}^d |\alpha_i|^{2 n_i(\bi)}$, which also depends only on $\n(\bi)$. Using the notation $|\alpha|^{2\n(\bi)}=|\alpha_{\bi}|^2$ to reflect that, we can write
\begin{equation}\label{eq:avg0}
\Lambda_{K_c} \L \ketbra{\phi}{\phi}^{\otimes l} \R = \sum_{\n:|\n|=l}  |\alpha|^{2\n} \binom{l}{\n}^2 N(\n)^2 \ketbra{\n}{\n}=  \sum_{\n:|\n|=l} \binom{l}{\n} |\alpha|^{2\n}  \ketbra{\n}{\n} \  .
\end{equation}
where the term $\binom{l}{\n}^2$  came from the sums over $\bi$ and $\bi'$, which just count the number of possibilities for $\bi$ such that $\n(\bi')=\n$ since nothing in the summand in Eq.\eqref{eq:firstAV} depends explicitly on $\bi$.

Let us now consider the averaging of the state in Eq.\eqref{eq:avg0} over the group $K_d$. However, this averaging is straightforward due to the following equality
\begin{equation}\label{eq:avgSd}
\Lambda_{K_d} \left(\ketbra{\n}{\n}\right) = \rho_{\t(\n)}\ ,
\end{equation}
where the state $\rho_{\t(\n)}$ was defined in Eq.\eqref{eq:permutAVfock}  and  $\tau(\n)$ labels the equivalence class of $\n$ up to permutations of the $n$ modes. So combining \eqref{eq:avg0} and \eqref{eq:avgSd} we get our final result
\begin{equation}
\Lambda_{K_d} \circ \Lambda_{K_c} \L \ketbra{\phi}{\phi}^{\otimes l} \R = \sum_{\t:|\t|=l} m_{\t} (\alpha) \binom{l}{\t}  \rho_{\t}\ ,
\end{equation}
where we have used the definition of the function $m_{\t} (\alpha)$ given in Eq.\eqref{eq:symPOLY}. 
\end{proof}
 
 \newpage

\section{Detailed example of the procedure of Theorem 2}\label{app:example}

\begin{figure}[b]
    \centering
    \includegraphics[width=0.6\textwidth]{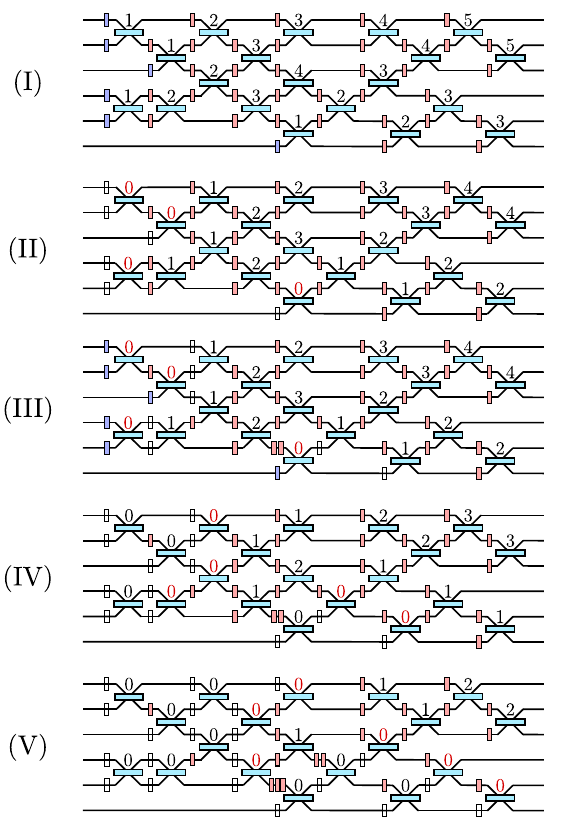}
    \caption{An example of our procedure to pull out losses from a linear-optical network. Beamsplitters labeled in red zeroes are the ones that, in \step $j$, are in $G_{j-1}$, as  defined in the main text. Empty boxes correspond to loss elements that have been commuted elsewhere. The blue loss elements in panel (I) are the same as in \cref{fig:numberednetwork}, and are the first to be pulled out. In panel (III) they exemplify the argument, given in \cref{sec:proofEXTR}, that in every step, after losses are commuted through the red-labeled beamsplitters, they recover the configuration of the previous step, and by the inductive hypothesis they can always be pulled out.}
    \label{fig:APexample}
\end{figure}

In \cref{fig:APexample} we represent the intermediate steps lead from \cref{fig:numberednetwork} to \cref{fig:Endexample}. Panel (I) shows the initial configuration, as in \cref{fig:numberednetwork}. Panel (II) shows the configuration and the relabeling after \step 1 is concluded. The set $G_1$ is identified by the red-labeled beamsplitter. In panel (III), losses have been commuted through the marked beamsplitters, and now occupy the same configuration as before \step 1, which means that by assumption they can now be pulled out as a uniform layer (this is trivial to see in this particular example, because the losses that will be pulled out are already at the input, but the argument holds in general). Panels (IV) and (V) show the configuration and labeling after \step 2 and \step 3, respectively. After \step 3, three layers of losses have been pulled out and the shortest \IO-path has length 0, so the procedure terminates.

\end{document}